\newcommand{\tr}{\operatorname{tr}} 
\newcommand{\one}{\mathds{1}} 
\newcommand{\HH}{\mathscr{H}} 
\newcommand{\rr}{ {\hspace{1pt} \| \hspace{1pt}} } 
\newcommand{\RR}{\mathbb{R}}
\newcommand{\mcS}{\mathcal{S}}
\newcommand{\Lam}{\Lambda}
\newcommand{\ketbra}[1]{|#1\rangle \langle #1|}
\newcommand{\ket}[1]{|#1\rangle}
\newcommand{\MM}{M}
\newcommand{\M}{\MM}
\newcommand{\cM}{\mathcal{M}}
\newcommand{\cN}{\mathcal{N}}
\newcommand{\sM}{\mathscr{M}}
\newcommand{\eps}{\epsilon}
\newcommand{\lam}{\lambda}
\newcommand{\dimH}{d}
\newcommand{\diamondN}[1]{{\left\| {#1} \right\|_{\diamond}}}
\newcommand{\normN}[2]{{\left\| {#2} \right\|_{#1}}}
\newcommand{\pos}{\mathcal{P}}
\newcommand{\dist}{\gamma}
\newcommand{\geps}{g(\eps)}
\newtheorem{thm}{Theorem}
\newtheorem{lem}[thm]{Lemma}
\newtheorem{prop}[thm]{Proposition}
\newtheorem{cor}[thm]{Corollary}
\newtheorem{defn}[thm]{Definition}
\newtheorem{rem}[thm]{Remark}
\newtheorem{example}[thm]{Example}
\begin{document}

\title{Continuity bounds on observational entropy and measured relative entropies}

\author{Joseph Schindler}
\email{JosephC.Schindler@uab.cat}
\affiliation{%
F\'{\i}sica Te\`{o}rica: Informaci\'{o} i Fen\`{o}mens Qu\`{a}ntics, Departament de F\'{\i}sica, Universitat Aut\`{o}noma de Barcelona, 08193 Bellaterra, Spain}

\author{Andreas Winter}
\email{andreas.winter@uab.cat}
\affiliation{%
F\'{\i}sica Te\`{o}rica: Informaci\'{o} i Fen\`{o}mens Qu\`{a}ntics, Departament de F\'{\i}sica, Universitat Aut\`{o}noma de Barcelona, 08193 Bellaterra, Spain}
\affiliation{Instituci\'o Catalana de Recerca i Estudis Avan{\c{c}}ats, Pg.~Llu\'is Companys, 23, 08010 Barcelona, Spain}
\affiliation{Institute for Advanced Study, Technische Universit\"at M\"unchen, Lichtenbergstra{\ss}e 2a, 85748 Garching, Germany}

\date{5 August 2023}  

\keywords{observational entropy, measured relative entropy, asymptotic continuity}


\begin{abstract}
We derive a measurement-independent asymptotic continuity bound on the observational entropy for general POVM measurements, making essential use of its property of bounded concavity. The same insight is used to obtain continuity bounds for other entropic quantities, including the measured relative entropy distance to a convex set of states under a general set of measurements. As a special case, we define and study conditional observational entropy, which is an observational entropy in one (measured) subsystem conditioned on the quantum state in another (unmeasured) subsystem. We also study continuity of relative entropy with respect to a jointly applied channel, finding that observational entropy is uniformly continuous as a function of the measurement. But we show by means of an example that this continuity under measurements cannot have the form of a concrete asymptotic bound.
\end{abstract}

\maketitle


\section{Introduction}

Observational entropy has recently emerged (in fact, re-emerged, \textit{cf.}~\cite{vonNeumann1929proof,vanKampen1954statistics,vonNeumann1955mathematical,wehrl1978general}) in studies of non-equilibrium statistical mechanics as a useful unifying framework to describe coarse-grained entropy in classical and quantum systems~\cite{safranek2019a,safranek2021brief,strasberg2020first}, with applications across thermodynamics and quantum information theory~\cite{safranek2019b,safranek2020classical,safranek2020quantifying,wehrl1979relation,riera2020finite,strasberg2021clausius,buscemi2022observational,amadei2019unitarity,zhou2022relations,zhou2022renyi,hamazaki2022speed,faiez2020typical,modak2022anderson,stokes2022nonconjugate,strasberg2022book,strasberg2022classicality,safranek2022work,sreeram2022chaos,zhou2023dynamical,safranek2023expectation}.

In the quantum case, for any measurement described by a POVM $\MM = (\MM_i)_{i \in \mathcal{I}}$, ($\MM_i \geq 0$, $\sum_i \MM_i = \one$), and quantum state described by a density matrix $\rho$, observational entropy
\begin{equation}
\label{eqn:oe}
    S_\MM(\rho) = -\sum_i p_i \log \frac{p_i}{V_i},
    \qquad
    p_i = \tr(M_i \rho),
    \qquad
    V_i = \tr(M_i),
\end{equation}
describes an entropy of the state $\rho$ ``coarse-grained'' by measurement $\MM$. The $p_i$ describe a probability distribution over ``macrostates'' (\textit{i.e.}~measurement outcomes), and the $V_i$ describe ``volumes'' of macrostates analogous to the phase space volumes defining Boltzmann entropy.

The formula \eqref{eqn:oe} defining observational entropy can be recast in relative entropy forms more suggestive of its information-theoretic content. In particular, with $\rho$ defined on a Hilbert space $\HH$ of finite dimension $d$, 
\begin{equation}
\label{eqn:mre-entropy}
    D(\Phi_M(\rho) \rr \Phi_M(\one/d)) = D(p \rr q) = \log d - S_\M(\rho),
\end{equation}
with $D$ denoting either quantum or classical relative entropy depending on its arguments. Here $M$ is implemented by quantum-classical channel
\begin{equation}
\label{eqn:mre-channel}
    \Phi_M(\sigma) = \textstyle\sum_i \tr(\sigma M_i) \ketbra{i},
\end{equation}
called the measuring channel of $M$, and \mbox{$p_i= \tr(M_i \rho)$}, $q_i=\tr(M_i \, \one /d)$ are the observed probability distributions induced by $M$ on $\rho$ and on the maximally mixed state, respectively.

It is evident from the definition that observational entropy is a continuous function of the state~$\rho$. For practical purposes, however, it is most useful to realize this continuity statement in terms of concrete continuity bounds, as has also been done for many other entropic quantities~\cite{fannes1973continuity,grabowski1977continuity,audenart2007sharp,audenart2005continuity,winter2016tight}. 

One naive version of such a bound follows quickly from other known bounds. Expanding the definition \eqref{eqn:oe} one can separate observational entropy into a sum of observed Shannon entropy $H(p)=-\textstyle\sum_i p_i \log p_i$ over measurement outcomes, and a mean Boltzmann entropy,
\begin{equation}
\label{eqn:shan+boltz}
    S_\MM(\rho) = H(p) + \textstyle\sum_i p_i \log V_i.
\end{equation}
Bounding the Shannon and Boltzmann terms separately one can obtain, given two states $\rho,\sigma$ such that $\frac{1}{2}||\rho - \sigma||_1 \leq \eps$, 
\begin{equation}
\label{eqn:naive-bound}
\begin{array}{rcl}
    \big|S_\M(\rho)-S_\M(\sigma)\big|  
    & \leq &
    \big|H(p^{\rho}) - H(p^{\sigma})\big| 
    + 
    \big|\textstyle\sum_i (p^{\rho}_i - p^{\sigma}_i) \log V_i \big|
    \\[10pt]
    & \leq &
    h(\eps) + \eps \, \log |\M| 
    + 
    \eps \, \big\| \log V_i \big\|_\infty .
\end{array}
\end{equation}
Here $|M|$ is the number of outcomes of the POVM $M$ (cardinality of the index set $\mathcal{I}$),  the function $h(x) = - x \log x - (1-x) \log (1-x)$ is the binary Shannon entropy, and $\| \log V_i \|_\infty = \max_i |\log V_i|$. The first two terms in the bottom line arise from the classical Zhang-Audenaert continuity bound on Shannon entropy~\cite{zhang2007estimating,wilde2011notes,winter2016tight}, while the latter term provides a loose bound on the Boltzmann term. Both terms are derived using elementary properties relating the quantum trace norm to probabilities~\cite{wilde2011notes}.

Though apparently useful insofar as it implies $|\Delta S|\to 0$ as $\eps \to 0$ for any particular $\M$, the bound \eqref{eqn:naive-bound} fails to be universal, depending on the measurement $\M$ both through the number of outcomes $|\M|$ and through the volume terms $V_i$. Moreover, it is unacceptably loose, as evidenced by the following pathology. Given any POVM $\M$, one can define another one $\M'$, with twice as many outcomes, by splitting each POVM element in half and counting it twice  ($\M_i \to \textrm{two copies } M_i/2$). Letting $\M \to \M'$ the observational entropies $S_\M(\rho)$ and $S_\M(\sigma)$ are both invariant, but by repeating this transformation the bound \eqref{eqn:naive-bound} can be made arbitrarily loose (noting that after $V_i \leq 1$ each iteration separately loosens both terms). This pathology highlights the need for an improved and universal bound.

One expects the shortcomings above to be common to any bound attempting to treat the Shannon and Boltzmann terms separately. On the other hand, one might hope that the presence of normalizing volume terms in the definition \eqref{eqn:oe} may actually help improve the bound compared to a Shannon entropy rather than (as occurs in \eqref{eqn:naive-bound}) making it looser. We will show this is indeed the case, improving the bound by treating the terms together.

In particular, the purpose of this note is to demonstrate that, for any measurement $\M$ and any states $\rho,\sigma$ such that $\frac{1}{2}||\rho - \sigma||_1 \leq \eps$, observational entropy obeys a continuity bound of the form
\begin{equation}
\label{eqn:oe-bound-intro}
    \big|S_\M(\rho)-S_\M(\sigma)\big| \leq \geps + \eps \, \log d.
\end{equation}
This bound is universal in that it is independent of $\M$, and depends only on the Hilbert space dimension $d$ and the trace distance between the states.

The function $g(\eps)$ arising in the bound is defined by $g(x) = - x \log x + (1+x) \log (1+x)$ for $x>0$, and $g(0) \equiv 0$. It derives here from the binary Shannon entropy $h(x) = - x \log x - (1-x) \log (1-x)$, where it arises as
\begin{equation}
\label{eqn:geps}
    g(\eps) = (1+\eps) \, h\Big(\frac{\eps}{1+\eps}\Big).
\end{equation}
Elsewhere it arises as an entropy in bosonic systems~\cite{holevo2001evaluating}. This $g(x)$ is particularly convenient, as throughout $x\in [0,\infty)$ it is continuous, monotonic increasing, non-negative, and concave.

The main bound \eqref{eqn:oe-bound-intro} appears in Theorem~\ref{thm:oe-continuity}. In order to establish it, we first will observe that observational entropy has the same bounded concavity property typical of other entropic quantities, that is, it is concave but not too concave. We then establish a general continuity bound based only on bounded concavity, obtaining the main result as a corollary. The method of proof follows closely the proof of Lemma~2 of \cite{winter2016tight}, which obtained a closely related tight bound on quantum conditional entropy, and which in turn is closely related to methods used in~\cite{alicki2004continuity,mosonyi2011renyi}, this method sometimes being known as the Alicki-Fannes-Winter (AFW) trick~\cite{alicki2004continuity,winter2016tight}. The general bound Proposition~\ref{thm:continuity-general} given here is closely related to Lemma~7 of \cite{winter2016tight}, and provides a slightly different generalization thereof. Continuity bounds of this form are often referred to as asymptotic continuity due to the bound per qudit scaling as $\eps \log d$ in the $n$-copy regime~\cite{donald2000continuity,synak2006asymptotic}.

In addition to the main observational entropy result, we also discuss continuity results for other entropic quantities that may be of general interest. In Proposition~\ref{thm:continuity-measured-single} we consider relative entropy distances of the form
\begin{equation}
    \inf_{\sigma \in \chi} D(\cM(\rho) \rr \cN(\sigma))
\end{equation}
to a convex set of states $\chi$ under channels $\cM,\cN$. In Theorem \ref{thm:continuity-measured} we then consider measured relative entropies under restricted measurements~\cite{piani2009relative}
\begin{equation}
    \inf_{\sigma \in \chi} \; \sup_{M \in \sM} D_{M}(\rho \rr \sigma),
\end{equation}
where $D_{M}(\rho \rr \sigma)$ is the classical relative entropy observed under measurement $M$. And in Proposition~\ref{thm:continuity-conditional} we apply these results to the case of an observational conditional entropy $S_{M_A}(A|B)_\rho$ motivated by relation to entropic uncertainty principles. All of these bounds are derived by appealing to a general bound in the form
\begin{equation}
    \big|Z(\rho)-Z(\sigma)\big| \leq \geps + \eps \, \kappa
\end{equation}
shown in Proposition~\ref{thm:continuity-general} for $Z(\rho)$ any function obeying bounded concavity/convexity. As a final step, in Corollary~\ref{thm:measurement-continuity-simulation-distance} we show that $S_M(\rho)$ and $D_M(\rho \rr \sigma)$ are continuous as functions of the measurement~$M$ (in an appropriate topology on POVMs). This result is a special case of Theorem~\ref{thm:measurement-continuity}, where we show uniform continuity of
\begin{equation}
    D\big(\Phi(\rho) \rr \Phi(\sigma)\big)
\end{equation}
as a function of channel $\Phi$. In Example~\ref{thm:example-binary} we show limitations preventing simple asymptotic bounds from being obtained for the measurement continuity.

In the following we continue to consider a finite dimensional Hilbert space $\HH$ of dimension~$d$ unless otherwise specified. Quantum states $\rho$ are positive semidefinite Hermitian operators normalized to unit trace. Measurements are described by POVMs (positive operator valued measures), defined as collections of positive semidefinite Hermitian operators summing to the identity, assumed here to have a finite number of outcomes. The von Neumann entropy is denoted by $S(\rho)=-\tr(\rho \log \rho)$, and Shannon entropy by $H(p) = -\sum_i p_i \log p_i$. The binary~Shannon entropy function is denoted by $h(x)$, and $g(x)$ is as above. Observational~entropy~$S_\M(\rho)$ is defined by \eqref{eqn:oe}.

\section{Bounded concavity}

An elementary property of Shannon entropy is bounded concavity~\cite{nielsen2010book} (in the case of classical Shannon entropy this property can be derived directly from a simple chain rule computation, but it can also be inferred as a subcase of the more commonly emphasized quantum generalization), which we state here as a lemma for later use.

\begin{lem}
\label{thm:shannon-concavity}
Let $\lam_k$ be a probability distribution and $p^{k}$ a probability distribution for each $k$. The Shannon entropy $H$ is concave but not too concave,
\begin{equation}
    0 \leq \Big[ H \big({\textstyle\sum}_k \lam_k \, p^k \big) - {\textstyle\sum}_k \lam_k \, H(p^k) \Big] \leq H(\lam).
\end{equation}
Note that $(p^k)_i = p_{i|k}$ is a conditional probability distribution and 
$\big(\sum_k \lam_k \, p^k\big)_i = \sum_k \lam_k \, p_{i|k}$.
\end{lem}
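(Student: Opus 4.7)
The plan is to realize the convex combination $q := \sum_k \lam_k p^k$ as the $I$-marginal of a joint distribution on two indices, and then deduce both inequalities from the chain rule together with the two most basic inequalities for Shannon entropy: subadditivity and non-negativity of conditional entropy. This keeps the argument structural and avoids any direct log-sum manipulation.

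Concretely, I would introduce the joint probability distribution $P(k,i) := \lam_k (p^k)_i$ on the product of the two index sets. Its $K$-marginal is $\lam$, its $I$-marginal is precisely $q$, and its conditionals are $P(\,\cdot\, \mid k) = p^k$. The chain rule then reads
\begin{equation}
    H(P) = H(\lam) + \sum_k \lam_k H(p^k).
\end{equation}
The lower bound (concavity) would follow from subadditivity $H(P) \leq H(\lam) + H(q)$, which rearranges to $\sum_k \lam_k H(p^k) \leq H(q)$. The upper bound would follow from $H(q) \leq H(P)$ (the marginal entropy is at most the joint entropy, equivalently $H(K \mid I) \geq 0$), which rearranges to $H(q) - \sum_k \lam_k H(p^k) \leq H(\lam)$.

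There is essentially no hard step here: the proof is an assembly of well-known identities, and the only minor care needed is the standard $0 \log 0 = 0$ convention when some $\lam_k$ or $(p^k)_i$ vanishes. An equivalent but slightly less transparent route would be to expand $H(q) - \sum_k \lam_k H(p^k)$ directly as $\sum_{k,i} \lam_k (p^k)_i \log \frac{(p^k)_i}{q_i}$, recognizing the lower bound as non-negativity of a convex combination of Kullback--Leibler divergences $D(p^k \rr q)$, and deriving the upper bound from the pointwise estimate $q_i \geq \lam_k (p^k)_i$, which gives $\log \frac{(p^k)_i}{q_i} \leq \log \frac{1}{\lam_k}$ and hence the desired $H(\lam)$ after summation. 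I would only fall back to this alternative if the joint-distribution framing caused notational friction elsewhere.
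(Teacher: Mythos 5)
Your proof is correct, and it matches the route the paper itself indicates: the lemma is stated without a written-out proof (cited as elementary, to Nielsen--Chuang), but the text explicitly notes it ``can be derived directly from a simple chain rule computation,'' which is precisely your joint-distribution argument via $H(K,I)=H(K)+H(I|K)$ together with subadditivity and $H(K|I)\geq 0$. Nothing further is needed.
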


Bounded concavity of observational entropy follows from the bounded concavity of Shannon entropy.

\begin{lem}[OE Bounded Concavity]
\label{thm:oe-concavity}
Let $\lam_k$ be a probability distribution and $\rho_k$ a state for each~$k$. The observational entropy $S_\M$ is concave but not too concave,
\begin{equation}
    0 \leq \Big[ S_\M \big({\textstyle\sum}_k \lam_k \, \rho_k \big) - {\textstyle\sum}_k \lam_k \, S_\M(\rho_k) \Big] \leq H(\lam).
\end{equation}
\end{lem}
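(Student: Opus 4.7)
The plan is to reduce the claim for observational entropy directly to the Shannon entropy bounded concavity stated in Lemma~\ref{thm:shannon-concavity}, by exploiting the Shannon--Boltzmann decomposition \eqref{eqn:shan+boltz} together with the linearity of the Boltzmann term in the state.

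First I would fix the POVM $\M=(M_i)_{i\in\mathcal I}$ and set $\rho=\sum_k\lam_k\rho_k$. For each $k$ define the outcome distribution $p^k_i=\tr(M_i\rho_k)$, and let $p_i=\tr(M_i\rho)$. By linearity of the trace, $p_i=\sum_k\lam_k p^k_i$, so $p$ is exactly the convex combination of the $p^k$ with weights $\lam_k$ appearing in Lemma~\ref{thm:shannon-concavity}.

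Next I would apply the decomposition \eqref{eqn:shan+boltz}, writing
\begin{equation*}
    S_\M(\rho_k)=H(p^k)+\sum_i p^k_i\log V_i,
    \qquad
    S_\M(\rho)=H(p)+\sum_i p_i\log V_i.
\end{equation*}
Taking the weighted combination over $k$ and subtracting, the Boltzmann contributions telescope to zero, since $\sum_k\lam_k\sum_i p^k_i\log V_i=\sum_i\bigl(\sum_k\lam_k p^k_i\bigr)\log V_i=\sum_i p_i\log V_i$. Therefore
\begin{equation*}
    S_\M(\rho)-\sum_k\lam_k S_\M(\rho_k) = H\bigl({\textstyle\sum}_k\lam_k p^k\bigr) - \sum_k\lam_k H(p^k).
\end{equation*}

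Finally I would invoke Lemma~\ref{thm:shannon-concavity} applied to the distributions $p^k$ with weights $\lam_k$, which immediately bounds the right-hand side between $0$ and $H(\lam)$, yielding the claimed inequality. There is really no obstacle here: the only ingredient beyond the Shannon statement is the observation that the volume-dependent Boltzmann term is affine in $\rho$ and therefore drops out exactly when forming the concavity defect. The result also makes clear why observational entropy inherits exactly the same bound $H(\lam)$ as Shannon entropy, with no dependence on the POVM $\M$ or on the dimension $d$.
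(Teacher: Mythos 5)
Your proof is correct and follows essentially the same route as the paper's: both reduce the observational-entropy concavity defect to the Shannon one via the decomposition \eqref{eqn:shan+boltz}, noting that the linear Boltzmann term cancels exactly, and then invoke Lemma~\ref{thm:shannon-concavity}. No differences worth noting.
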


\begin{proof}
Let $\rho = \sum_k \lam_k \rho_k$ and $p_i = \tr(M_i \rho)$. Then $p_i = \sum_k \lam_k \, p_{i|k}$, where $p_{i|k} = \tr(M_i \rho_k)$. Observe that only the difference in Shannon entropy contributes to the difference in OE, since the mean Boltzmann entropy is equal for the two distributions. That is, we have (abusing the notation $H(p_i) \longleftrightarrow H(p)$ for simplicity where unambiguous)
\begin{align}
    \Delta S
    &\equiv
    S_\M(\rho) - {\textstyle\sum}_k \lam_k \, S_\M(\rho_k)
    \\
    &=
    H(p_i) + {\textstyle\sum}_i p_i \log V_i
    - {\textstyle\sum}_k \lam_k \, [ H(p_{i|k}) + {\textstyle\sum}_i p_{i|k} \log V_i]
    \\
    &=
    H(p_i) 
    - {\textstyle\sum}_k \lam_k \,  H(p_{i|k})
    \\
    &\equiv
    \Delta H.
\end{align}
But Lemma~\ref{thm:shannon-concavity} precisely says that
$0 \leq \Delta H \leq H(\lam)$,
the desired result.
\end{proof}

In general bounded concavity for real functions on convex sets may be defined as follows.

\begin{defn}[General Bounded Concavity/Convexity]
A function $Z: \mcS \to \RR$ defined on a convex set $\mcS$ is bounded concave if, for any finite convex combination $\rho = \sum_k \lam_k \rho_k$,
\begin{equation}
\label{eqn:Z-bounded-concave}
    0 
    \leq 
    \Big[ Z\big(\rho \big) - {\textstyle\sum}_k \lam_k \, Z(\rho_k) \Big]
    \leq 
    H(\lam).
\end{equation}
A function $Z$ is bounded convex if $-Z$ is bounded concave.
\end{defn}

In the next section we derive a continuity bound for functions of quantum states that applies to any function with a bounded concavity or convexity property of this form, with observational entropy continuity as a corollary.

\section{Continuity}

The continuity bound proved below is derived by appealing to bounded concavity. Similar methods have been used to derive continuity results for other entropic quantities in the literature, such as in~\cite{donald2000continuity,alicki2004continuity,synak2006asymptotic,mosonyi2011renyi,li2014relative,winter2016tight,shirikov2020advanced,bluhm2022continuity,bluhm2023general,shirikov2023quantifying}. The following decomposition is essential to performing the main trick of the proof, sometimes referred to as the Alicki-Fannes-Winter trick (\textit{cf.}~\cite{alicki2004continuity,winter2016tight}). Whereas the previous section only involved convex structure, the derivation of the following decomposition also involves topology via the trace norm. The resulting continuity bound depends on both the convex and topological structures of the space of quantum states.

\begin{lem}[$\omega \Delta$ Decomposition]
\label{thm:omega-delta-decomp}
    Let $\rho, \sigma$ be quantum states, and define $\eps = \frac{1}{2}||\rho-\sigma||_1$. There exists a state~$\omega$ and states $\Delta_\pm$ such that
    \begin{equation}
    \begin{array}{rcl}
        \omega
        & = &
        \dfrac{1}{1+\eps} \; \rho +   \dfrac{\eps}{1+\eps} \; \Delta_-
        \\[12pt]
        & = &
        \dfrac{1}{1+\eps} \; \sigma +   \dfrac{\eps}{1+\eps} \; \Delta_+.
    \end{array}
\end{equation}
\end{lem}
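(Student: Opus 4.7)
The plan is to construct the decomposition explicitly from the Jordan--Hahn (positive/negative-part) decomposition of $\rho - \sigma$. Specifically, since $\rho - \sigma$ is Hermitian with zero trace, I can write $\rho - \sigma = P - N$ with $P, N \geq 0$ of orthogonal support. Because $\tr(\rho - \sigma) = 0$ we have $\tr P = \tr N$, and because $\|\rho-\sigma\|_1 = \tr P + \tr N$ this common value equals $\eps$.

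Next I would \emph{define} the candidate states by setting $\Delta_+ = P/\eps$ and $\Delta_- = N/\eps$ (these are states, being positive and unit trace), and taking $\omega$ to be the right-hand side of either of the two displayed expressions. To see that both expressions yield the same operator $\omega$, I would verify the identity $\rho + \eps \, \Delta_- = \sigma + \eps \, \Delta_+$, which is the claim $\rho + N = \sigma + P$. This is immediate from $P - N = \rho - \sigma$ after rearrangement. Dividing through by $1 + \eps$ then gives the two equalities in the statement simultaneously.

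It remains to verify that $\omega$ really is a quantum state. Since $\omega$ is presented as a convex combination of $\rho$ and $\Delta_-$ with weights $\tfrac{1}{1+\eps}$ and $\tfrac{\eps}{1+\eps}$ (and each summand is a bona fide state, with $\Delta_\pm$ being states by construction from the Jordan decomposition), it is positive semidefinite and has unit trace.

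The only subtle point is the edge case $\eps = 0$: there $\rho = \sigma$ and the coefficient $\tfrac{\eps}{1+\eps}$ vanishes, so we may take $\omega = \rho = \sigma$ and $\Delta_\pm$ arbitrary. Apart from this convention, the construction is completely routine, and I do not expect any hard step; the entire content of the lemma is the observation that the Jordan decomposition of $\rho - \sigma$ yields states $\Delta_\pm$ of appropriate weight $\eps$ whose addition to $\sigma$ and $\rho$ respectively produces a common operator, which after normalization is the desired $\omega$.
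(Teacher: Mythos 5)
Your proposal is correct and follows essentially the same route as the paper: both use the Jordan decomposition of $\rho-\sigma$ to obtain states $\Delta_\pm$ with weight $\eps$, and differ only in that the paper defines $\omega$ symmetrically as a mixture of $(\rho+\sigma)/2$ with $(\Delta_++\Delta_-)/2$ and then checks both identities, while you define $\omega$ by one of the two expressions and verify they coincide via $\rho+\eps\Delta_-=\sigma+\eps\Delta_+$. Your explicit handling of the $\eps=0$ edge case is a minor point the paper leaves implicit.
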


\begin{proof}
Let $\rho$, $\sigma$ be states such that $||\rho - \sigma||_1 = 2 \eps$, and let
\begin{equation}
    \eps \, \Delta_{\pm} = \pm (\rho - \sigma)_{\pm}. 
\end{equation}
The positive and negative parts $(\rho - \sigma)_{\pm}$  are defined in the usual way~\cite{wilde2011notes}, by diagonalizing the Hermitian matrix $\rho-\sigma$, then separating into the sum of a non-negative diagonal and a non-positive diagonal matrix. That is, $\rho-\sigma = U^\dag D U = U^\dag (D_+ + D_-) U = \eps \Delta_+ -  \eps \Delta_-$. By definition $\Delta_{\pm} \geq 0$ since they are Hermitian with only non-negative eigenvalues. Since $\rho-\sigma$ is traceless, we have $\tr(\Delta_+) = \tr(\Delta_-) = ||\Delta_\pm||_1$, with the last equality by positive semidefiniteness. Further, $||\rho-\sigma||_1 = ||D_+ + D_-||_1 = ||D_+||_1 + ||D_-||_1 = ||\eps \Delta_+||_1 + ||\eps \Delta_-||_1 = 2 \eps  ||\Delta_\pm||_1 = 2\eps $ using unitary invariance of the trace norm and the fact that $D_{\pm}$ are diagonal. 
Thus $\tr(\Delta_{\pm})=1$, and $\Delta_{\pm}$ are states. 

Now define
\begin{equation}
\label{eqn:decomp-omega}
    \omega 
    = \frac{1}{1+\eps} \; \frac{(\rho + \sigma)}{2} 
    + \frac{\eps}{1+\eps} \; \frac{(\Delta_+ + \Delta_-)}{2}.
\end{equation}
This $\omega$ is a convex combination of states, so $\tr(\omega)=1$ and $\omega$ is a state. The above relations taken together with $\omega$ can be viewed as expressing a linear decomposition
\begin{align}
    \label{eqn:decomp-difference}
    \rho - \sigma &= \eps \, (\Delta_+ - \Delta_-) 
    \\
    \rho + \sigma &= 2(1+\eps) \, \omega - \eps (\Delta_+ + \Delta_-).
\end{align}
From \eqref{eqn:decomp-omega} with \eqref{eqn:decomp-difference} it is straightforward to verify that
\begin{equation}
\label{eqn:continuity-omega-identities}
    \omega 
    = \frac{1}{1+\eps} \, \rho +   \frac{\eps}{1+\eps} \, \Delta_- 
    = \frac{1}{1+\eps} \, \sigma +   \frac{\eps}{1+\eps} \, \Delta_+,
\end{equation}
which completes the proof.
\end{proof}

Next appears a general continuity bound for any function $Z(\rho)$ obeying bounded concavity on quantum states. Since \eqref{eqn:bound-general-Z} is invariant under $Z \to -Z$ this implies the same bound for functions obeying bounded convexity.

\begin{prop}
\label{thm:continuity-general}
Let $Z(\rho)$ be any function satisfying bounded concavity (\textit{cf.}~\eqref{eqn:Z-bounded-concave}) on quantum states. We have the following continuity bound.

Let $\rho$, $\sigma$ be states such that $\frac{1}{2}||\rho - \sigma||_1 \leq \eps$. Then
\begin{equation}
\label{eqn:bound-general-Z}
    \big|Z(\rho)-Z(\sigma)\big| \leq \geps + \eps \, \kappa,
\end{equation}
where $\geps$ is as in \eqref{eqn:geps} and $\kappa = \sup_{(\mu,\nu)}|Z(\mu)-Z(\nu)|$ is the supremum absolute difference in $Z$ between any two states.
\end{prop}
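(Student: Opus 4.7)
The plan is to apply the Alicki-Fannes-Winter trick: feed the two different decompositions of the single auxiliary state $\omega$ from Lemma~\ref{thm:omega-delta-decomp} into the bounded-concavity inequality, and then subtract the two resulting inequalities so that $Z(\rho)-Z(\sigma)$ is exposed on one side with controlled quantities on the other.

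Concretely, I would first invoke Lemma~\ref{thm:omega-delta-decomp} to obtain the state $\omega$ together with $\Delta_\pm$ satisfying
\begin{equation*}
    \omega = \tfrac{1}{1+\eps}\,\rho + \tfrac{\eps}{1+\eps}\,\Delta_-
           = \tfrac{1}{1+\eps}\,\sigma + \tfrac{\eps}{1+\eps}\,\Delta_+,
\end{equation*}
noting that the two-point probability distribution $\lam=\bigl(\tfrac{1}{1+\eps},\tfrac{\eps}{1+\eps}\bigr)$ has Shannon entropy $H(\lam)=h\bigl(\tfrac{\eps}{1+\eps}\bigr)$. Applying bounded concavity of $Z$ to each decomposition yields two chains of the form
\begin{equation*}
    0 \leq Z(\omega) - \tfrac{1}{1+\eps}\,Z(\rho) - \tfrac{\eps}{1+\eps}\,Z(\Delta_-) \leq h\bigl(\tfrac{\eps}{1+\eps}\bigr),
\end{equation*}
and analogously with $\rho,\Delta_-$ replaced by $\sigma,\Delta_+$.

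Subtracting these two inequalities eliminates $Z(\omega)$ and yields
\begin{equation*}
    Z(\sigma)-Z(\rho) = \eps\,\bigl[Z(\Delta_-)-Z(\Delta_+)\bigr] + (1+\eps)(A-B),
\end{equation*}
where $A,B$ are the two nonnegative ``slack'' terms bounded above by $h\bigl(\tfrac{\eps}{1+\eps}\bigr)$. Hence $|A-B|\leq h\bigl(\tfrac{\eps}{1+\eps}\bigr)$, while by the definition of $\kappa$ we have $|Z(\Delta_-)-Z(\Delta_+)|\leq \kappa$. Combining and using the identity $(1+\eps)\,h\bigl(\tfrac{\eps}{1+\eps}\bigr)=g(\eps)$ from \eqref{eqn:geps} gives the claimed bound; symmetry in $\rho,\sigma$ produces the absolute value.

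There is no real obstacle here—once Lemma~\ref{thm:omega-delta-decomp} is in hand the proof is essentially a bookkeeping exercise. The only subtle point is ensuring the factor $(1+\eps)$ that emerges when one clears the denominator $\tfrac{1}{1+\eps}$ in front of $Z(\rho)-Z(\sigma)$ lands exactly on the entropy slack, so that it can be absorbed into $g(\eps)$ via \eqref{eqn:geps}, rather than multiplying the $\kappa$ term (which would have spoiled the asymptotic-continuity form of the bound).
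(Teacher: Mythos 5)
Your proposal is correct and follows essentially the same route as the paper: the same $\omega\Delta$ decomposition, the same application of bounded concavity to the two convex decompositions of $\omega$ with weights $\bigl(\tfrac{1}{1+\eps},\tfrac{\eps}{1+\eps}\bigr)$, and the same bookkeeping (the paper chains the lower bound of one sandwich against the upper bound of the other rather than subtracting both, but this yields the identical estimate). The only detail to add is that the proposition assumes $\tfrac{1}{2}\|\rho-\sigma\|_1 \leq \eps$ rather than equality, so one should note, as the paper does, that $\geps + \eps\,\kappa$ is monotonically increasing in $\eps$, whence the bound proved at the exact trace distance extends to any larger $\eps$.
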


\begin{proof}
Suppose $\eps = \frac{1}{2}||\rho - \sigma||_1$, and suppose there exist  a state $\omega$ and states $\Delta_\pm$ such that
\begin{equation}
\label{eqn:continuity-omega-identities-PROOF}
    \omega 
    = \frac{1}{1+\eps} \, \rho +   \frac{\eps}{1+\eps} \, \Delta_- 
    = \frac{1}{1+\eps} \, \sigma +   \frac{\eps}{1+\eps} \, \Delta_+.
\end{equation}
The existence of such states was demonstrated in Lemma~\ref{thm:omega-delta-decomp}. From \eqref{eqn:continuity-omega-identities-PROOF} we have two different convex decompositions $\omega = \lam \, \omega_1 + (1-\lam) \, \omega_2$, both with the same coefficients $\lam = 1/(1+\eps)$. By the assumption \eqref{eqn:Z-bounded-concave} of bounded concavity,
\begin{equation}
   \big[ \lam \, Z(\omega_1) + (1-\lam) \, Z(\omega_2) \big]
   \leq
    Z(\omega) 
   \leq 
    h(\lam) + \big[ \lam \, Z(\omega_1) + (1-\lam) \, Z(\omega_2) \big]
\end{equation}
holds for both of these convex decompositions. The trick is simply to use $\omega_1,\omega_2 = \rho,\Delta_-$ for one side of the inequality and $\omega_1,\omega_2 = \sigma,\Delta_+$ for the other. This amounts to
\begin{equation}
   \lam \, Z(\rho) + (1-\lam) \, Z(\Delta_-)
   \leq
    h(\lam)
    +
    \lam \, Z(\sigma) + (1-\lam) \, Z(\Delta_+).
\end{equation}
Rearranging yields
\begin{equation}
    \, (Z(\rho) - Z(\sigma)) 
   \leq
   \frac{h(\lam) +(1-\lam) \, (Z(\Delta_+)-Z(\Delta_-))}{\lam} .
\end{equation}
Finally we take the magnitude and apply triangle inequality,
\begin{align}
    |Z(\rho) - Z(\sigma)| 
   &\leq
   \frac{h(\lam)}{\lam} + \frac{(1-\lam)}{\lam} \, |Z(\Delta_+)-Z(\Delta_-)|
   \\
   & \leq
   \frac{h(\lam)}{\lam} + \frac{(1-\lam)}{\lam} \; \kappa.
\end{align}
The second inequality follows since a magnitude difference in $Z$ between two states is by definition less than or equal to $\kappa$. Finally with $\lam = 1/(1+\eps)$ and $h(\lam)=h(1-\lam)$ we obtain
\begin{equation}
    \, |Z(\rho) - Z(\sigma)| 
   \leq
   \geps + \eps \, \kappa .
\end{equation}
One can confirm the bound is monotonic in $\eps$. Therefore if it holds for $\eps= \frac{1}{2}||\rho - \sigma||_1$, as was just shown, then it also holds for $\eps' \geq \eps$. This completes the proof.
\end{proof}

In the following sections this general bound is evaluated for the observational entropy and extended to a form that includes several related quantities.

\section{Observational Entropy}

The general statement Proposition~\ref{thm:continuity-general} implies a continuity bound on observational entropy.

\begin{thm}[OE Continuity]
\label{thm:oe-continuity}
Let $\rho$, $\sigma$ be quantum states such that $\frac{1}{2}||\rho - \sigma||_1 \leq \eps$. Then observational entropy obeys the continuity bound
\begin{equation}
\label{eqn:oe-bound}
    \big|S_\M(\rho)-S_\M(\sigma)\big| \leq \geps + \eps \, \log d,
\end{equation}
where $g(\eps)$ is as in \eqref{eqn:geps} and $d$ is the Hilbert space dimension.
\end{thm}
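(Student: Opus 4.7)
The plan is to read this off directly as a corollary of Proposition~\ref{thm:continuity-general} applied to $Z(\rho) = S_\M(\rho)$. Lemma~\ref{thm:oe-concavity} establishes exactly that observational entropy satisfies the bounded concavity hypothesis of Proposition~\ref{thm:continuity-general}, so the only remaining work is to estimate the constant $\kappa = \sup_{\mu,\nu} |S_\M(\mu) - S_\M(\nu)|$ by $\log d$.

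To control $\kappa$, I would show the two-sided bound $0 \leq S_\M(\rho) \leq \log d$ for every state $\rho$, which then gives $\kappa \leq \log d$. The upper bound is immediate from the relative-entropy form \eqref{eqn:mre-entropy}: since $D(\Phi_M(\rho) \rr \Phi_M(\one/d)) \geq 0$ by nonnegativity of relative entropy, one obtains $S_\M(\rho) \leq \log d$. Equivalently, using that $(V_i/d)_i$ is a probability distribution (since $\sum_i V_i = \tr(\sum_i M_i) = d$), one may rewrite $S_\M(\rho) = \log d - D(p \rr V/d)$. For the lower bound, the key observation is that $p_i \leq V_i$ for each $i$: diagonalizing $M_i = \sum_j \mu_j \ketbra{j}$ with eigenvalues $\mu_j \in [0,1]$ (since $M_i \leq \one$), one has $p_i = \sum_j \mu_j \langle j | \rho | j\rangle \leq \sum_j \mu_j = V_i$. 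Hence each term $-p_i \log(p_i/V_i)$ in the definition \eqref{eqn:oe} is nonnegative, so $S_\M(\rho) \geq 0$.

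Combining these gives $|S_\M(\mu) - S_\M(\nu)| \leq \log d$ for all states $\mu, \nu$, so $\kappa \leq \log d$, and Proposition~\ref{thm:continuity-general} then yields
\begin{equation*}
    \big| S_\M(\rho) - S_\M(\sigma)\big| \leq g(\eps) + \eps \log d,
\end{equation*}
which is exactly the claimed bound \eqref{eqn:oe-bound}.

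I do not anticipate any real obstacle here: the conceptual and technical content is carried entirely by Lemma~\ref{thm:oe-concavity} (bounded concavity) and Proposition~\ref{thm:continuity-general} (the AFW-style extraction of a continuity bound from bounded concavity), with the present theorem amounting to the elementary range estimate $0 \leq S_\M \leq \log d$ used to turn the abstract constant $\kappa$ into the dimension factor.
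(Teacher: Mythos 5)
Your proposal is correct and follows exactly the paper's own route: invoke Lemma~\ref{thm:oe-concavity} for bounded concavity, apply Proposition~\ref{thm:continuity-general}, and evaluate $\kappa=\log d$ from the elementary range $0 \leq S_\M \leq \log d$ (which the paper simply cites, while you supply the short verification). No gaps.
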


\begin{proof}
By Lemma~\ref{thm:oe-concavity}, $S_\M$ obeys bounded concavity, so the bound follows from Proposition~\ref{thm:continuity-general}. An elementary property of observational entropy is that $0 \leq S_\M(\rho)\leq \log \dimH$~\cite{safranek2020quantifying}. Thus we have
$\kappa = \sup_{(\mu,\nu)}|S_\M(\mu)-S_\M(\nu)| 
= \sup_{\mu} S_\M(\mu) = \log \dimH$.
\end{proof}

The main result, a measurement-independent continuity bound on observational entropy, has therefore been established.

\vspace{2pt}

As an aside we note that bounds on both the Shannon and von Neumann entropies can be derived from the observational entropy continuity, as shown in the following corollary. The bounds thus obtained are looser than the known optimized forms~\cite{audenart2007sharp}.

\begin{cor}
Let $\rho$, $\sigma$ be states such that $\frac{1}{2}||\rho - \sigma||_1 \leq \eps$. Then von Neumann entropy $S$ obeys
\begin{equation}
     \big|S(\rho)-S(\sigma) \big| \leq \geps + \eps \, \log \dimH.
\end{equation}
Let $p_i$, $q_i$ be $N$-outcome probability distributions such that $\frac{1}{2}||\Vec{p} - \Vec{q}||_1 \leq \eps$. Then Shannon entropy~$H$ obeys
\begin{equation}
     \big|H(p_i)-H(q_i) \big| \leq \geps + \eps \, \log N.
\end{equation}
In this form these are both special cases of Theorem~\ref{thm:oe-continuity}, however, tighter bounds can be derived by other methods.
\end{cor}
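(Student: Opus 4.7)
The plan is to derive both bounds as consequences of Theorem~\ref{thm:oe-continuity} by making a judicious choice of POVM $M$ so that observational entropy either reproduces or at worst upper-bounds the target entropy on the states in question.

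For the Shannon case, I would identify the $N$-outcome distributions $p$ and $q$ with diagonal density matrices $\rho_p$, $\rho_q$ on $\mathbb{C}^N$, noting that $\tfrac{1}{2}\|\rho_p-\rho_q\|_1$ coincides with the total variation distance $\tfrac{1}{2}\|\vec{p}-\vec{q}\|_1$. Taking $M$ to be the computational basis projective POVM $M_i = \ketbra{i}$, every volume is $V_i = \tr M_i = 1$, so the Boltzmann term in \eqref{eqn:oe} vanishes and $S_M(\rho_p) = H(p)$, $S_M(\rho_q) = H(q)$. Applying Theorem~\ref{thm:oe-continuity} with $d = N$ then immediately yields the Shannon bound.

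For the von Neumann case, the central observation is that observational entropy upper-bounds von Neumann entropy, $S_M(\tau) \geq S(\tau)$ for every POVM $M$ and state $\tau$. This follows from the relative-entropy expression \eqref{eqn:mre-entropy}, $S_M(\tau) = \log d - D(\Phi_M(\tau) \rr \Phi_M(\one/d))$, combined with data-processing of $D$ under $\Phi_M$, which gives $D(\Phi_M(\tau) \rr \Phi_M(\one/d)) \leq D(\tau \rr \one/d) = \log d - S(\tau)$. Moreover, equality $S_M(\tau) = S(\tau)$ is attained when $M$ is a rank-one projective measurement in an eigenbasis of $\tau$.

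The subtlety --- and the step I would be most careful about --- is that in general no single measurement achieves $S_M(\rho)=S(\rho)$ and $S_M(\sigma)=S(\sigma)$ simultaneously, so I cannot just substitute into Theorem~\ref{thm:oe-continuity}. I would handle this asymmetry by a WLOG argument: assume $S(\rho) \geq S(\sigma)$ and pick $M$ to be the rank-one projective measurement in an eigenbasis of $\sigma$. Then $S_M(\sigma) = S(\sigma)$ while $S_M(\rho) \geq S(\rho)$, so
\begin{equation}
    0 \leq S(\rho) - S(\sigma) \leq S_M(\rho) - S_M(\sigma) = |S_M(\rho) - S_M(\sigma)| \leq \geps + \eps \log d,
\end{equation}
the final inequality being Theorem~\ref{thm:oe-continuity}. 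Swapping the roles of $\rho$ and $\sigma$ covers the opposite ordering, yielding $|S(\rho)-S(\sigma)| \leq \geps + \eps \log d$ as claimed.
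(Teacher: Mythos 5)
Your proposal is correct and follows essentially the same route as the paper: the Shannon bound via embedding $p,q$ as diagonal states measured in the computational basis (all $V_i=1$), and the von Neumann bound via the WLOG ordering $S(\rho)\geq S(\sigma)$ together with measurement in the $\sigma$ eigenbasis, using $S_M(\rho)\geq S(\rho)$ and $S_M(\sigma)=S(\sigma)$. No gaps; your explicit justification of $S_M(\tau)\geq S(\tau)$ by data processing is the same standard fact the paper invokes implicitly.
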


\begin{proof}
We derive these as corollaries of Theorem~\ref{thm:oe-continuity}. 
For the quantum case suppose (wlg) that $S(\rho) \geq S(\sigma)$. Let $\M_\sigma$  measure in the $\sigma$ eigenbasis. Then 
$|S_{\M_\sigma}(\rho) - S_{\M_\sigma}(\sigma)| 
= S_{\M_\sigma}(\rho) - S(\sigma) 
\geq
S(\rho) - S(\sigma)
= |S(\rho) - S(\sigma)|.$
For the classical case, let $\rho= \sum_i p_i \ketbra{i}$ and $\sigma= \sum_i q_i \ketbra{i}$  in \mbox{$N$-dimensional} Hilbert space, and $\M_0$ measure in the $\ketbra{i}$ basis. (At least $N$ dimensions are needed to embed $p,q$ classically.) Since the states are diagonal we have $||\rho-\sigma||_1 = ||\Vec{p}-\Vec{q}||_1 \leq 2 \eps$. Then measuring $\M_0$ obtains $p_i$, $q_i$ and we have $|S_{\M_0}(\rho) - S_{\M_0}(\sigma)|
= |H(p_i)-H(q_i)|$.
\end{proof}

\section{Measured relative entropy}

The general Proposition~\ref{thm:continuity-general} likewise provides a continuity bound on measured relative entropies associated with a particular measurement, as well as on the quantum relative entropy.
\begin{prop}
\label{thm:continuity-measured-single}
    Let $\chi$ be a convex set of states (or more generally, of positive semidefinite Hermitian operators). Let $\cM$ and $\cN$ be arbitrary quantum channels. Then
    \begin{equation}
        Z(\rho) = \inf_{\sigma \in \chi} D(\cM(\rho) \rr \cN(\sigma)),
    \end{equation}is bounded convex. Thus $Z(\rho)$ obeys the bound \eqref{eqn:bound-general-Z} of Proposition~\ref{thm:continuity-general}. This ensures continuity whenever the maximum variation $\kappa$ is finite.
\end{prop}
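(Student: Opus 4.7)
The plan is to verify bounded convexity of $Z(\rho) = \inf_{\sigma \in \chi} D(\cM(\rho) \rr \cN(\sigma))$ directly from joint convexity of quantum relative entropy and bounded concavity of von Neumann entropy, and then invoke Proposition~\ref{thm:continuity-general} applied to $-Z$ to extract the stated continuity bound (whenever the variation $\kappa$ is finite). Fix a finite convex combination $\rho = \sum_k \lam_k \rho_k$.

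For the ordinary convexity half $Z(\rho) \leq \sum_k \lam_k Z(\rho_k)$, I would choose, for each $k$, a near-optimizer $\sigma_k \in \chi$ with $D(\cM(\rho_k) \rr \cN(\sigma_k)) \leq Z(\rho_k) + \eps$. Convexity of $\chi$ puts $\sigma := \sum_k \lam_k \sigma_k$ in $\chi$, and linearity of the channels gives $\cM(\rho) = \sum_k \lam_k \cM(\rho_k)$ and $\cN(\sigma) = \sum_k \lam_k \cN(\sigma_k)$. Joint convexity of quantum relative entropy then yields $D(\cM(\rho) \rr \cN(\sigma)) \leq \sum_k \lam_k D(\cM(\rho_k) \rr \cN(\sigma_k)) \leq \sum_k \lam_k Z(\rho_k) + \eps$. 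Taking the infimum over $\sigma \in \chi$ and $\eps \to 0$ completes this half.

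For the bounded half $\sum_k \lam_k Z(\rho_k) - Z(\rho) \leq H(\lam)$, the trick is to pick a single near-optimizer $\sigma^\ast \in \chi$ for $Z(\rho)$ and use it as a trial element for every $Z(\rho_k)$, giving $Z(\rho_k) \leq D(\cM(\rho_k) \rr \cN(\sigma^\ast))$. The heart of the argument is then the identity, for any fixed second argument $\beta = \cN(\sigma^\ast)$ and states $\alpha_k = \cM(\rho_k)$,
\begin{equation}
\sum_k \lam_k D(\alpha_k \rr \beta) - D\Big({\textstyle\sum_k} \lam_k \alpha_k \;\Big\rr\; \beta\Big) = S\Big({\textstyle\sum_k} \lam_k \alpha_k\Big) - \sum_k \lam_k S(\alpha_k),
\end{equation}
which follows by direct computation because the $-\tr(\alpha \log \beta)$ contributions are linear in $\alpha$ and cancel. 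The right-hand side is exactly the von Neumann entropy concavity defect, bounded by $H(\lam)$. Combining, $\sum_k \lam_k Z(\rho_k) \leq D(\cM(\rho) \rr \cN(\sigma^\ast)) + H(\lam) \leq Z(\rho) + H(\lam) + \eps$, and letting $\eps \to 0$ finishes the argument.

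Together these show $-Z$ is bounded concave, so Proposition~\ref{thm:continuity-general} delivers \eqref{eqn:bound-general-Z}. The main delicate point I anticipate is the treatment of the infimum when it is not attained, which forces working with $\eps$-optimizers on both sides; a secondary subtlety is that $\chi$ may contain sub-normalized positive operators, but the key identity above only relies on the $\alpha_k = \cM(\rho_k)$ being bona fide states (guaranteed since $\cM$ is a channel), while $\beta = \cN(\sigma^\ast)$ need only be a positive operator relative to which $D$ is finite.
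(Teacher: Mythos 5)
Your proposal is correct and follows essentially the same route as the paper: the ordinary convexity half via per-$k$ near-optimizers, convexity of $\chi$, and joint convexity of relative entropy; and the $H(\lam)$ half via the fixed-second-argument identity reducing the defect to the von Neumann concavity defect, then passing the single (near-)optimizer for $Z(\rho)$ through each $Z(\rho_k)$. The only point the paper treats slightly more explicitly is the bookkeeping when $D=\infty$, which your closing remark covers adequately.
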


\begin{proof}

Let $\rho = \sum_k \lam_k \rho_k$. First note that for any fixed $\sigma \in \chi$ we have
\begin{equation}
    \textstyle\sum_k \lam_k D(\cM(\rho_k) \rr \cN(\sigma)) - D(\cM(\rho) \rr \cN(\sigma)) = S(\cM(\rho)) - \textstyle\sum_k \lam_k S(\cM(\rho_k)),
\end{equation}
since $D(\cM(\rho) \rr \cN(\sigma)) = -S(\cM(\rho)) - \tr \cM(\rho) \log \cN(\sigma)$ and applying linearity of the channels. Thus
\begin{equation}
\label{eqn:single-sigma}
    0
    \leq
    \textstyle\sum_k \lam_k D(\cM(\rho_k) \rr \cN(\sigma)) - D(\cM(\rho) \rr \cN(\sigma))
    \leq
    H(\lam)
\end{equation}
by bounded concavity of von Neumann entropy. Note that $D(\cM(\rho) \rr \cN(\sigma)) =  \infty$ if and only if $\textstyle\sum_k \lam_k D(\cM(\rho_k) \rr \cN(\sigma))=\infty$ because the von Neumann terms are finite in finite dimension, and that \eqref{eqn:single-sigma} can be rearranged so the infinite case $\infty=\infty=\infty$ is well defined.

Now we introduce the infimum over $\sigma \in \chi$. It is immediate that the upper bound on convexity is retained, since
\begin{equation}
    \textstyle\sum_k \lam_k \inf_{\sigma_k \in \chi} D(\cM(\rho_k) \rr \cN(\sigma_k)) 
    \leq
    \inf_{\sigma \in \chi}  \textstyle\sum_k \lam_k D(\cM(\rho_k) \rr \cN(\sigma))
    \leq
    H(\lam) + Z(\rho),
\end{equation}
using that the sum of infima is no greater than the infimum of the sum and \eqref{eqn:single-sigma}.

The upper convexity bound just derived holds for the infimum of any family of bounded convex functions. In the particular case considered here, the lower convexity bound also holds, using convexity of the set $\chi$ and joint convexity of the relative entropy. In particular, since $Z(\rho)$ is an infimum, it follows that for any $\rho$ and any $\delta > 0$, there exists a state $\gamma \in \chi$ such that $D(\cM(\rho) \rr \cN(\gamma)) \leq Z(\rho) + \delta$. Thus fix $\delta > 0$ and let $D(\cM(\rho_k) \rr \cN(\gamma_k)) \leq Z(\rho_k) + \delta$, with $\nu = \sum_k \lam_k \gamma_k$. Then
\begin{equation}
        \textstyle\sum_k \lam_k \, Z(\rho_k) + \delta
        \geq
        \textstyle\sum_k \lam_k \, D(\cM(\rho_k) \rr \cN(\gamma_k))
        \geq
        D(\cM(\rho) \rr \cN(\nu))
        \geq
        Z(\rho).
\end{equation}
Since this holds for all $\delta > 0$, it follows that $\sum_k \lam_k \, Z(\rho_k) \geq Z(\rho)$. The first inequality expresses the initial infimum assumption, the second is joint convexity of relative entropy and linearity of the channels, and the last that $\nu \in \chi$ using convexity of the set.
\end{proof}

\begin{rem}
    If $\cN(\chi)$ contains an element of full rank the maximum variation $\kappa$ is necessarily finite. If $\cM = \cN$ and $\chi$ is a set of states containing the maximally mixed state, then $\kappa \leq \log d$.
\end{rem}

\begin{proof}
If $\cN(\chi)$ contains an element $\cN(\sigma)$ of full rank then all $|Z(\rho)| \leq \log d + ||\log \cN(\sigma)||_{\infty}$ and therefore $\kappa$ is finite. Meanwhile if $\cM = \cN$ and $\chi \ni \sigma$ is a set of states, then $D(\rho \rr \sigma) \geq D(\cM(\rho) \rr \cM(\sigma)) \geq 0$, so if also $(\one/d) \in \chi$ then $\kappa \leq \sup_\rho S(\rho) \leq \log d$.
\end{proof}

In particular, Proposition~\ref{thm:continuity-measured-single} in the case  $\cM = \cN = \Phi_M$  (\textit{cf.}~\eqref{eqn:mre-channel}) demonstrates a bound on measured relative entropy distance to a convex set of states, given the single measurement $M$. A natural question is whether this extends to include measured relative entropies associated with an allowed class of measurements $\sM$~\cite{piani2009relative,li2014relative},
\begin{equation}
    D_{\sM}(\rho \rr \sigma) = \sup_{M \in \sM} D(\Phi_{M} (\rho) \rr \Phi_{M}(\sigma)).
\end{equation}

In full generality we take $\chi$ a convex set of states and $\sM$ a set of measurements, and consider the quantity
\begin{equation}
    Z(\rho) = \inf_{\sigma \in \chi}  D_{\sM}(\rho \rr \sigma).
\end{equation}
Due to the supremum in $D_{\sM}$ this $Z(\rho)$ fails to obey bounded convexity, and the bound \eqref{eqn:bound-general-Z} cannot be applied directly. Nonetheless, we demonstrate below that a closely related continuity result can be obtained. This provides a more general and systematic extension to related results previously appearing in~\cite{li2014relative}, and also subsumes the entropic bounds already considered above.

A crucial step is the following minimax lemma.

\begin{lem}
\label{thm:minimax}
Let $\chi$ be a compact convex set of states. Let $\sM$ be a set of quantum channels that is closed under flagged finite convex combination, that is, such that if $\cM_k \in \sM$, and $\lam_k$ is a finite probability distribution, then also $\sum_k \lam_k \cM_k \otimes \ketbra{k} \in \sM$. Then
\begin{equation}
    \inf_{\sigma \in \chi} \sup_{\cM \in \sM} D(\cM(\rho) \rr \cM(\sigma) ) 
    =
    \sup_{\cM \in \sM}  \inf_{\sigma \in \chi} D(\cM(\rho) \rr \cM(\sigma) ).
\end{equation}
\end{lem}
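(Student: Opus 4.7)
The inequality $\sup\inf \le \inf\sup$ holds tautologically, so the content is in the reverse direction. I would prove this by contradiction, using the compactness of $\chi$ to extract a finite cover of channels and then reducing the problem to a classical finite-dimensional minimax on the probability simplex.

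Set $F(\sigma,\cM) := D(\cM(\rho)\rr\cM(\sigma))$. Two structural properties drive the argument. \emph{Convexity and lower semicontinuity in the state:} joint convexity of $D$ together with linearity of $\cM$ gives that $\sigma \mapsto F(\sigma,\cM)$ is convex on $\chi$ for each fixed $\cM$, and standard LSC of $D(\eta\rr\cdot)$ in trace norm makes it LSC. \emph{Affinity under flagged combinations:} for $\cM_\lam = \sum_k \lam_k \cM_k \otimes \ketbra{k}$ the outputs $\cM_\lam(\rho)$ and $\cM_\lam(\sigma)$ are simultaneously block-diagonal in the flag register, so block-additivity of relative entropy yields
\begin{equation}
F(\sigma, \cM_\lam) = \textstyle\sum_k \lam_k \, F(\sigma, \cM_k),
\end{equation}
and by hypothesis $\cM_\lam \in \sM$.

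Assume for contradiction that $V := \sup_{\cM \in \sM} \inf_{\sigma \in \chi} F(\sigma,\cM) < \inf_{\sigma \in \chi} \sup_{\cM \in \sM} F(\sigma,\cM) =: U$, and fix $t$ with $V < t < U$. For each $\sigma \in \chi$, $\sup_\cM F(\sigma,\cM) > t$ supplies some $\cM \in \sM$ with $F(\sigma,\cM) > t$; by LSC the sets $\{\sigma' : F(\sigma',\cM) > t\}$ are open and cover the compact set $\chi$, so a finite subcover yields $\cM_1, \dots, \cM_n \in \sM$ with $\max_k F(\sigma, \cM_k) > t$ for every $\sigma \in \chi$. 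Since $\max_k F(\cdot, \cM_k)$ is LSC on compact $\chi$, its infimum over $\sigma$ is attained and thus strictly exceeds $t$. Now on the compact convex product $\Delta_n \times \chi$ apply a classical minimax theorem (Sion's, for instance) to $G(\lam,\sigma) = \sum_k \lam_k F(\sigma, \cM_k)$, which is affine in $\lam$ and LSC convex in $\sigma$, to conclude
\begin{equation}
\inf_\sigma \max_\lam G(\lam,\sigma) \;=\; \max_\lam \inf_\sigma G(\lam,\sigma).
\end{equation}
The LHS equals $\inf_\sigma \max_k F(\sigma,\cM_k) > t$. Using flagged linearity and closure of $\sM$ under flagged combinations, the RHS equals $\max_\lam \inf_\sigma F(\sigma, \cM_\lam) \le \sup_{\cM \in \sM} \inf_\sigma F(\sigma,\cM) = V < t$, a contradiction.

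\textbf{Main obstacle.} The flagged-combination axiom is engineered precisely so that $F$ becomes affine (hence quasi-concave) in the channel direction; cleanly verifying this affinity and the LSC of $F(\cdot,\cM)$, while gracefully handling the possibility that $F$ takes the value $+\infty$ (e.g.\ when supports mismatch), is the delicate point. Once those two structural facts are in hand, the compactness-plus-covering reduction to a finite minimax on the simplex is essentially forced.
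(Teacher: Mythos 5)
Your proposal is correct and follows essentially the same route as the paper: both arguments convexify the channel side via finitely supported probability distributions, use the flag-register identity $D(\cM_\lam(\rho)\rr\cM_\lam(\sigma))=\sum_k\lam_k D(\cM_k(\rho)\rr\cM_k(\sigma))$ together with the closure hypothesis to pull the combination back into $\sM$, and invoke Sion's minimax theorem with the compact convex $\chi$ on the infimum side. The only difference is cosmetic---you first extract a finite subfamily of channels by an open-cover argument and then apply Sion on $\Delta_n\times\chi$, whereas the paper applies Sion directly to $\chi$ paired with the set of finitely supported distributions over $\sM$---and both treatments leave the same acknowledged caveat about handling the value $+\infty$.
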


\medskip
Operationally motivated classes $\sM$ of transformations often have the required property automatically, such as LOCC, SEP, PPT channels in composite systems. 

\begin{proof}
This lemma has been established previously in~\cite{brandao2013adversarial} on the grounds of a general minimax theorem of~\cite{farkas2006potential}. For completeness we include its demonstration here. We argue based on the minimax theorem in the form of Sion~\cite{sion1958minimax}. Namely, consider the function $\sum_i p_i D(\cM_i(\rho) \rr \cM_i(\sigma))$ of two arguments, one a state $\sigma \in \chi$, the other a finite-support probability distribution over channels from~$\sM$. The first set ($\chi$) is convex and compact by assumption, the second (denoted $\mathbb{P}\!\sM$) is convex by construction but otherwise an arbitrary space. For simplicity assume that our objective function $f(\sigma,p)$ is finite; it is then continuous, and it is convex in the first and linear---in particular concave---in the second argument. Hence, by~\cite{sion1958minimax} we get 
\[\begin{split}
\inf_{\sigma\in\chi} \sup_{\cM\in\sM} D(\cM(\rho) \rr \cM(\sigma)) 
  &= \inf_{\sigma\in\chi} \sup_{p\in\mathbb{P}\!\sM} \sum_i p_i D(\cM_i(\rho) \rr \cM_i(\sigma)) \\
  &= \sup_{p\in\mathbb{P}\!\sM} \inf_{\sigma\in\chi} \sum_i p_i D(\cM_i(\rho) \rr \cM_i(\sigma)).
\end{split}\]
Note that in general we need the sum over $i$ weighted by $p_i$’s, otherwise the final supremum is not large enough. But if the set of channels is closed under taking flagged convex combinations, i.e. with~$\cM_i$ it also contains $\cM' := \sum_i p_i \cM_i \otimes \ketbra{i}$, then $\sum_i p_i D(\cM_i(\rho) \rr \cM_i(\sigma)) = D(\cM'(\rho) \rr \cM'(\sigma))$, hence 
\[
  \sup_{p\in\mathbb{P}\!\sM} \inf_{\sigma\in\chi} \sum_i p_i D(\cM_i(\rho) \rr \cM_i(\sigma)) 
  = \sup_{\cM \in \sM}  \inf_{\sigma \in \chi} D(\cM(\rho) \rr \cM(\sigma) ),
\]
which concludes the proof in the case that $f$ is always finite. To cover the case of values $+\infty$, one can modify the argument, removing the infinities from the objective function; or use the minimax theorem from~\cite{farkas2006potential} as done in~\cite{brandao2013adversarial}.
\end{proof}

For classes $\sM$ of measurements closed under disjoint convex combination, and compact convex sets $\chi$ of states, we have---with the help of the minimax lemma---the following continuity theorem for measured relative entropies $Z(\rho)=\inf_{\sigma \in \chi}  D_{\sM}(\rho \rr \sigma)$. 

We remark that many commonly invoked measurement classes (such as LOCC, SEP, PPT measurements in composite systems) automatically fulfill the required closure property.  Furthermore, given some primitive set of measurements $\sM_0$ not already doing so, from an operational perspective it is perfectly feasible to complete this to an $\sM$ that is closed in this way, given the capability to choose which measurements to perform based on classical randomness.

\begin{thm}
\label{thm:continuity-measured}
Let $\chi$ be a compact convex set of states. Let $\sM$ be a convex set of measurements that is also closed under disjoint convex combination (that is, convex combinations taken on the disjoint union of the outcome sets). Let $\Phi_M$ as in \eqref{eqn:mre-channel} and define
\begin{equation}
    Z(\rho) = \inf_{\sigma \in \chi} \; \sup_{M \in \sM} \; Z_{M,\sigma}(\rho),
    \qquad \qquad
    Z_{M,\sigma}(\rho) = D(\Phi_M(\rho) \rr \Phi_M(\sigma) ).
\end{equation}
Then for any two states $\rho$, $\rho'$ such that $\frac{1}{2}||\rho - \rho'||_1 \leq \eps$, 
\begin{equation}
    \big|Z(\rho)-Z(\rho')\big| \leq \geps + \eps \, \kappa.
\end{equation}
Here
\begin{equation}
    \kappa = \sup_{M \in \sM} \left[ \sup_{\mu,\nu} \; \big|Z_{M}(\mu)-Z_{M}(\nu)\big| \right],
    \qquad \qquad
    Z_{M}(\rho) = \inf_{\sigma \in \chi} \;  Z_{M,\sigma}(\rho),
\end{equation}
and $g(\eps)$ is as in \eqref{eqn:geps}.
\end{thm}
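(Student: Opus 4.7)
The plan is to reduce the statement to the single-measurement case already handled by Proposition~\ref{thm:continuity-measured-single} by exchanging the $\inf_\sigma$ and $\sup_M$ via the minimax Lemma~\ref{thm:minimax}, and then observing that the uniform continuity estimate of Proposition~\ref{thm:continuity-general} is preserved under taking a pointwise supremum. Concretely, the workflow is: swap inf and sup; apply bounded-convexity continuity to each $Z_M$; then promote the bound through the outer $\sup_M$.

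For the minimax step, the hypothesis that $\sM$ is closed under disjoint convex combination of measurements translates exactly into the flagged-convex-combination closure required by Lemma~\ref{thm:minimax}. Indeed, the disjoint combination of measurements $M_k$ with weights $\lam_k$ has measuring channel
\[
    \Phi_{M'}(\rho) = \sum_k \lam_k \, \Phi_{M_k}(\rho) \otimes \ketbra{k},
\]
which, up to relabeling of the outcome register, is precisely the flagged combination of the $\Phi_{M_k}$. Compactness and convexity of $\chi$ are hypotheses of the theorem, so Lemma~\ref{thm:minimax} applies and yields $Z(\rho) = \sup_{M \in \sM} Z_M(\rho)$.

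For the per-measurement continuity, for each fixed $M \in \sM$ Proposition~\ref{thm:continuity-measured-single} (with $\cM = \cN = \Phi_M$ and the same set $\chi$) implies that $Z_M$ is bounded convex on states, and Proposition~\ref{thm:continuity-general} therefore supplies
\[
    \big|Z_M(\rho) - Z_M(\rho')\big| \leq g(\eps) + \eps \, \kappa_M,
\]
with $\kappa_M = \sup_{\mu,\nu} |Z_M(\mu) - Z_M(\nu)| \leq \kappa$. Setting $C = g(\eps) + \eps\,\kappa$, this bound holds uniformly in $M$. The final promotion through the supremum is the standard two-line argument: for each $M$, $Z_M(\rho) \leq Z_M(\rho') + C \leq \sup_{M'} Z_{M'}(\rho') + C$, so $\sup_M Z_M(\rho) \leq Z(\rho') + C$, and by symmetry $|Z(\rho) - Z(\rho')| \leq C$, which is the claimed bound.

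The main obstacle I anticipate is the bookkeeping at the minimax step: verifying cleanly that ``disjoint convex combination of measurements'' corresponds to ``flagged convex combination of measuring channels'' (including degenerate cases where the $M_k$ have outcome sets of different cardinality and must be embedded into a common output register), and that Lemma~\ref{thm:minimax} remains applicable when some of the relative entropies take the value $+\infty$. The latter is already addressed inside the proof of that lemma via the alternative minimax theorem of~\cite{farkas2006potential}, so no additional work should be needed beyond citing it. Everything downstream of the minimax interchange is then a routine application of results already proved above.
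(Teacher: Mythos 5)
Your proposal is correct and follows essentially the same route as the paper: invoke the minimax Lemma~\ref{thm:minimax} (after identifying disjoint convex combinations of measurements with flagged convex combinations of measuring channels) to rewrite $Z(\rho)=\sup_{M}Z_M(\rho)$, then apply the bounded convexity of each $Z_M$ from Proposition~\ref{thm:continuity-measured-single} together with Proposition~\ref{thm:continuity-general}, and pass the uniform bound through the outer supremum. The only cosmetic difference is that you spell out the sup-exchange inequality $|\sup_M a_M - \sup_M b_M| \leq \sup_M |a_M - b_M|$ as a two-line argument where the paper states it in one step.
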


\begin{proof}
Let $\Phi_{\sM}$ denote the set of measuring channels associated with class $\sM$, defined as in~\eqref{eqn:mre-channel}. Clearly 
$
Z(\rho) 
= \inf_{\sigma \in \chi} \; \sup_{M \in \sM} D(\Phi_M(\rho) \rr \Phi_M(\sigma) )
= \inf_{\sigma \in \chi} \; \sup_{\Phi \in \Phi_{\sM}} D(\Phi_M(\rho) \rr \Phi_M(\sigma) )
$.
Now the measuring channel is convex linear in $M$ on a fixed outcome set, $\Phi_{\sum_k \lam_k M_k} = \sum_k \lam_k \Phi_{M_k}$. To perform disjoint convex combination, which we denote $\oplus_k \lam_k M_k$, one first extends each outcome set to the disjoint union of outcome sets, padding with zeroes, then convexly combines on the fixed set. Suppose $\Phi_{M_k} \in \Phi_\sM$. It can be seen that $\Phi = \sum_k \lam_k \Phi_{M_k} \otimes \ketbra{k} = \Phi_{\oplus_k \lam_k M_k}$, which is also in $\Phi_\sM$ since $\oplus_k \lam_k M_k \in \sM$. Therefore the assumptions of the Lemma~\ref{thm:minimax} are fulfilled, and it holds
\begin{equation}
    \inf_{\sigma \in \chi} \; \sup_{M \in \sM} \; Z_{M,\sigma}(\rho)
    =
    \sup_{M \in \sM} \; \inf_{\sigma \in \chi} \; Z_{M,\sigma}(\rho).
\end{equation}
This provides the key step to relate $Z(\rho)$ to a form where bounded convexity can be exploited.

Proceeding, first using the minimax equality, and next that a difference of suprema is no less than the supremum difference, one obtains
\begin{align}
    | Z(\rho) - Z(\rho')|
    &=
    \left|\inf_{\sigma \in \chi} \; \sup_{M \in \sM} \; Z_{M,\sigma}(\rho) - \inf_{\sigma \in \chi} \; \sup_{M \in \sM} \; Z_{M,\sigma}(\rho') \right|
    \\
    & =
    \left| \sup_{M \in \sM} \; \inf_{\sigma \in \chi} \;  Z_{M,\sigma}(\rho) - \sup_{M \in \sM} \; \inf_{\sigma \in \chi} \; Z_{M,\sigma}(\rho') \right|
    \\ 
    & \leq
    \sup_{M \in \sM} \;  \left|Z_{M}(\rho)-Z_{M}(\rho') \right|.
\end{align}
But by Proposition~\ref{thm:continuity-measured-single} this $Z_M(\rho)$ is bounded convex and obeys \eqref{eqn:bound-general-Z}, hence further
\begin{align}
    | Z(\rho) - Z(\rho')|
    & \leq
    \sup_{M \in \sM} \;  \left[
     \geps + \eps \, \kappa_M
    \right]
    \\
    & =
     \geps + \eps \,  \sup_{M \in \sM} \; \kappa_M,
\end{align}
establishing the result.
\end{proof}

This provides asymptotic continuity of the entropies of restricted measurement first considered by Piani~\cite{piani2009relative}. Similar results were also obtained recently in independent work, in the context of local classes of measurements, and will appear elsewhere~\cite{lami2023notes}.

The main limitation of the Theorem~\ref{thm:continuity-measured} is the requirement that $\chi$ be convex. One may also be interested in measured relative entropy distances to non-convex sets of states such as coherent or classically correlated states. In this regard the Theorem can be compared to results of~\cite{li2014relative}, which obtained a looser bound but allows $\chi$ to be any set star-shaped around the maximally mixed state. Obtaining a bound for star-shaped $\chi$ with an optimal dimensional factor remains a useful future direction.

\section{Conditional observational entropy}

The above results are applied in this section to derive continuity bounds on an observational conditional entropy in bipartite systems, defined by
\begin{equation}
\label{eqn:conditional}
    S_{M_A}(A|B)_\rho  
    = -D\Big((\Phi_{M_A} \otimes I_B) \, \rho \, \Big\| (\Phi_{M_A} \otimes I_B)  (\one_A \otimes \rho_B) \Big),
\end{equation}
a quantity motivated by applications to entropic uncertainty principles with memory~\cite{coles2017entropic}. This conditional form describes quantum uncertainty in system $B$ having made a measurement $M_A$ in system $A$. 

We have the following bound. Note that the dimensional factor of $\log d$ arising from maximum variation is reduced by a factor of two from the corresponding bound on quantum conditional entropy $S(A|B)_\rho =-D(\rho || \one_A \otimes \rho_B)$~\cite{winter2016tight}.

\begin{prop}
\label{thm:continuity-conditional}
Let $\rho$, $\sigma$ be quantum states such that $\frac{1}{2}||\rho - \sigma||_1 \leq \eps$. Then
\begin{equation}
    \big|S_{M_A}(A|B)_\rho-S_{M_A}(A|B)_\sigma \big| \leq \geps + \eps \, \log d_A,
\end{equation}
where $g(\eps)$ is as in \eqref{eqn:geps} and $d_A$ is the Hilbert space dimension in system $A$.    
\end{prop}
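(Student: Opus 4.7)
The plan is to cast $-S_{M_A}(A|B)_\rho$ as an infimum of the form treated in Proposition~\ref{thm:continuity-measured-single}, so that the continuity bound follows by combining Propositions~\ref{thm:continuity-measured-single} and~\ref{thm:continuity-general}; the remaining work is then to control the maximum variation $\kappa$ by $\log d_A$. I take $\cM = \cN = \Phi_{M_A} \otimes I_B$ and let $\chi = \{\one_A \otimes \sigma_B : \sigma_B \text{ a state on } B\}$, a convex set of positive semidefinite operators. Splitting the relative entropy along the classical register produced by $\Phi_{M_A}$ gives
\begin{equation}
    D(\cM(\rho) \rr \cM(\one_A \otimes \sigma_B)) = D(p \rr V) + \sum_i p_i D(\bar p_{i,B} \rr \sigma_B),
\end{equation}
with $p_i = \tr(M_i \rho_A)$, $V_i = \tr(M_i)$, and $\bar p_{i,B} = \tr_A[(M_i \otimes \one_B)\rho]/p_i$. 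Only the second term depends on $\sigma_B$, and the quantum Gibbs inequality minimizes it at $\sigma_B = \rho_B$. Hence $Z(\rho) := \inf_{\sigma \in \chi} D(\cM(\rho) \rr \cN(\sigma)) = -S_{M_A}(A|B)_\rho$, and Proposition~\ref{thm:continuity-measured-single} certifies that $Z$ is bounded convex. Proposition~\ref{thm:continuity-general} then produces $|Z(\rho)-Z(\sigma)| \leq \geps + \eps \kappa$ with $\kappa = \sup_{\mu,\nu} |S_{M_A}(A|B)_\mu - S_{M_A}(A|B)_\nu|$.

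It then remains to show $\kappa \leq \log d_A$, which I reduce to establishing $0 \leq S_{M_A}(A|B)_\rho \leq \log d_A$ for every state. The upper endpoint is immediate: $\cM(\one_A \otimes \rho_B)/d_A$ is a state, so the nonnegativity of relative entropy between $\cM(\rho)$ and this state yields $D(\cM(\rho) \rr \cM(\one_A \otimes \rho_B)) \geq -\log d_A$, i.e.\ $S_{M_A}(A|B)_\rho \leq \log d_A$. For the lower endpoint, evaluating the splitting formula at $\sigma_B = \rho_B$ produces the clean identity $S_{M_A}(A|B)_\rho = S_{M_A}(\rho_A) - \chi_\mathrm{Hol}$, where $\chi_\mathrm{Hol} = S(\rho_B) - \sum_i p_i S(\bar p_{i,B})$ is the Holevo information of the ensemble on $B$ induced by measuring $A$. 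Data processing of $D(\rho_A \rr \one_A/d_A)$ under $\Phi_{M_A}$ yields $S_{M_A}(\rho_A) \geq S(\rho_A)$. And passing to a purification $|\Psi\rangle_{ABR}$ of $\rho_{AB}$, lifting to the ensemble $\{p_i, \omega_i^{BR}\}$ on $BR$ (with $\tr_R \omega_i^{BR} = \bar p_{i,B}$), Holevo monotonicity under the channel $\tr_R$ together with $S(\rho_{BR}) = S(\rho_A)$ from the Schmidt decomposition deliver $\chi_\mathrm{Hol} \leq S(\rho_A)$. Combining, $\chi_\mathrm{Hol} \leq S(\rho_A) \leq S_{M_A}(\rho_A)$, whence $S_{M_A}(A|B)_\rho \geq 0$.

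The main obstacle is precisely this lower bound. A naive data-processing argument only gives $S_{M_A}(A|B)_\rho \geq S(A|B)_\rho \geq -\log d_A$, which would lead to the weaker constant $2\log d_A$, as in the bound on quantum conditional entropy. The purification and Holevo-monotonicity argument is the ingredient that realizes the factor-of-two improvement advertised in the statement, and reflects that the measurement classicalizes the subsystem $A$.
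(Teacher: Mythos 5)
Your proposal is correct, and its overall scaffolding coincides with the paper's: both cast the conditional observational entropy as (minus) an infimized relative entropy of the form in Proposition~\ref{thm:continuity-measured-single} with $\cM=\cN=\Phi_{M_A}\otimes I_B$ and a convex reference set of product form, then reduce the whole problem to showing $0 \leq S_{M_A}(A|B)_\rho \leq \log d_A$ so that $\kappa \leq \log d_A$. (Your choice of the unnormalized set $\{\one_A\otimes\sigma_B\}$ versus the paper's $\{\tfrac{\one_A}{d_A}\otimes\omega_B\}$ only shifts $Z$ by $\log d_A$ and is immaterial; both rely on the same Gibbs-inequality identification of the optimizer $\sigma_B=\rho_B$ and the same splitting $S_{M_A}(A|B)_\rho = S_{M_A}(\rho_A) - S(\rho_B) + \sum_i p_i S(\bar p_{i,B})$.) Where you genuinely diverge is in the crucial lower bound $S_{M_A}(A|B)_\rho\geq 0$: the paper first establishes it for pure $\rho_{AB}$, where $S(\rho_A)=S(\rho_B)$ makes it immediate from $S_{M_A}(\rho_A)\geq S(\rho_A)$, and then extends to mixed states by invoking concavity of $S_{M_A}(A|B)_\rho$ in $\rho$; you instead prove it directly for arbitrary states by purifying to $\ket{\Psi}_{ABR}$, bounding the Holevo information of the induced $B$-ensemble by that of the $BR$-ensemble via monotonicity under $\tr_R$, and using $S(\rho_{BR})=S(\rho_A)\leq S_{M_A}(\rho_A)$. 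Your route is slightly longer but self-contained in that it does not require separately certifying the concavity of the conditional quantity (which the paper asserts as observation (v) without detailed proof), while the paper's pure-state-plus-concavity argument is more elementary at each step. You also correctly identify that the naive chain $S_{M_A}(A|B)_\rho \geq S(A|B)_\rho \geq -\log d_A$ would only give $\kappa\leq 2\log d_A$, which is exactly the comparison the paper draws with the quantum conditional entropy bound.
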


\begin{proof}
We derive the bound as a consequence of Theorem~\ref{thm:continuity-measured-single}. To establish the link we note that the observational conditional entropy $S_{M_A}(A|B)_\rho$ defined by \eqref{eqn:conditional} admits the variational formula
\begin{equation}
\label{eqn:conditional-variational}
    S_{M_A}(A|B)_\rho =  \log d_A
    - \inf_{\omega_B} \; D\Big((\Phi_{M_A} \otimes I_B) \, \rho \, \Big\| (\Phi_{M_A} \otimes I_B)  (\tfrac{\one_A}{d_A} \otimes \omega_B) \Big),
\end{equation}
where the infimum is over all states $\omega_B$ in system $B$. 
(We will return to the justification of this variational formula after concluding the rest of the proof.) In light of this equality, the difference $S_{M_A}(A|B)_\rho-S_{M_A}(A|B)_\sigma$ is precisely equal to a difference between infimized relative entropies of the form in Theorem~\ref{thm:continuity-measured-single}, in particular with $\cM=\cN = \phi_{M_A} \otimes I_B$ and $\chi = \{\tfrac{\one_A}{d_A} \otimes \omega_B \, | \, \omega_B {\rm \ a \ state \ on \ }B \}$. Thus the Theorem~\ref{thm:continuity-measured-single} demonstrates the bound \eqref{eqn:bound-general-Z} holds, and it only remains to evaluate the maximum variation $\kappa$.

To determine $\kappa$ it suffices to obtain the bound $0 \leq S_{M_A}(A|B)_\rho \leq \log d_A$ for all $\rho$. From non-negativity of the relative entropy, the upper bound $S_{M_A}(A|B)_\rho \leq \log d_A$ follows immediately. To establish the lower bound we will make use of the following observations:
\begin{enumerate}[(i)]
    \item With $\Phi_M$ a measurement channel as in \eqref{eqn:mre-channel}, $(\Phi_M \otimes I) \rho = \sum_j \ketbra{j} \otimes p_j \, \rho_j$ yields a classical-quantum state, with $p_j=\tr(M_j \rho_A)$ the induced probability distribution over $M$ outcomes, and $\rho_j = \tr_A((M_j \otimes \one) \rho)/p_j$ some conditional states.
    \item For any two such classical-quantum states, $D(\rho||\rho') = D(p_j || p'_j) + \sum_j p_j D(\rho_j||\rho'_j)$.
    \item With $p_j=\tr(M_j \rho_A)$ and $q_j = \tr(M_j)/d_A$, one has $S_M(\rho_A) = \log d_A - D(p_j || q_j)$ the marginal observational entropy in system $A$.
    \item With $p_j$, $\rho_j$ as in (i), one has $\sum_j p_j \rho_j = \rho_B$ since $\sum_j M_j = \one_A$.
    \item $S_{M_A}(A|B)_\rho$ is concave in $\rho$.
\end{enumerate}
We thus have
\begin{align}
    S_{M_A}(A|B)_\rho 
    &=
    \log d_A
    - D\Big((\Phi_{M_A} \otimes I_B) \, \rho \, \Big\| (\Phi_{M_A} \otimes I_B)  (\tfrac{\one_A}{d_A} \otimes \rho_B) \Big)
    \\[4pt]
    &=
    \log d_A
    - D(p_j || q_j) - \sum_j p_j D(\rho_j||\rho_B)
    \\
    \label{eqn:conditional-magic-formula}
    &=
    S_{M_A}(\rho_A) - S(\rho_B) + \sum_j p_j S(\rho_j).
\end{align}
The first equality is by definition, the second uses (i) and (ii), and the third uses (iii) and (iv). Finally, we note that for pure states $\ket{\psi_{AB}}$ we have that $S(\rho_A)=S(\rho_B)$ and therefore that $S_{M_A}(\rho_A) \geq S(\rho_B)$. Thus for pure states we have $S_{M_A}(A|B)_{\ketbra{\psi}} \geq 0$. But by concavity this non-negativity extends to convex combinations and so in general it holds that $S_{M_A}(A|B)_{\rho} \geq 0$.

To complete the proof, note that since $0 \leq S_{M_A}(A|B)_\rho \leq \log d_A$, the same bounds hold for the infimized relative entropy in \eqref{eqn:conditional-variational}. Thus the maximum variation $\kappa$ in Theorem~\ref{thm:continuity-measured-single} is given by $\kappa = \sup_{\mu,\nu} |Z(\mu)-Z(\nu)| \leq \sup_{\mu} Z(\mu) \leq \log d_A$.

It remains to justify the variational formula \eqref{eqn:conditional-variational}. Applying the same methods used to obtain~\eqref{eqn:conditional-magic-formula} earlier, \eqref{eqn:conditional-variational} can be seen to become $S_{M_A}(\rho_A) + \sum_j p_j S(\rho_j) - \inf_{\omega_B} \; \tr(-\rho_B \log \omega_B)$. But it follows immediately from $D(\rho_B \rr \omega_B) \geq 0$ that $\tr(-\rho_B \log \omega_B) \geq \tr(-\rho_B \log \rho_B)$ for any $\omega_B$, establishing that $\omega_B = \rho_B$ saturates the infimum.
\end{proof}

\section{Continuity with respect to measurements}

The preceding sections have demonstrated concrete uniform continuity bounds on observational entropy $S_M(\rho)$ (and related quantities) under variations in the state $\rho$. An equally important question is that of continuity under changes in the measurement $M$. With a suitable norm in the space of measurements (namely the diamond norm between measuring channels), can similar bounds be obtained on $|S_{M}(\rho)-S_{M'}(\rho)|$ for a fixed state?

We will show that although the observational entropy $S_M(\rho)$ is indeed a continuous function of $M$, no simple concrete bounds of the type obtained earlier can hold. A strong limitation is exhibited by the following example.

\begin{example}
\label{thm:example-binary}
\normalfont
Let $M = (\Pi, \one - \Pi)$ be a binary measurement, and $\Tilde{M} = (\one-\Pi, \Pi)$ its permutation. Consider a convex combination of this binary $M$ with its permutation,
\begin{equation}
    M_\lam = (1-\lam)M + \lam \Tilde{M}.
\end{equation}
This has POVM elements $M_{\lam} = \Big(\big[(1-2\lam)\Pi + 2\lam (\one/2)\big], \big[(1-2\lam)(\one-\Pi) + 2\lam (\one/2)\big] \Big)$. It follows that also
\begin{equation}
    M_\lam = (1-2\lam) \, M + 2\lam \, \left[\tfrac{\one}{2}\right],
\end{equation}
where $\left[\tfrac{\one}{2}\right]$ denotes the trivial binary measurement with elements $(\one/2,\one/2)$. This shows that mixing $M$ with $\lam$ of its permutation is equivalent to mixing $M$ with $2 \lam$ of pure noise.

Observe that measuring channels are convex linear in the measurement, namely 
$\Phi_{(1-\lam)M+\lam N} = (1-\lam)\Phi_M+\lam \Phi_N$. Thus the diamond norm distance between channels implementing $M_\lam$ and $M$ is bounded by 
$
\frac{1}{2}\| \Phi_{M_\lam} - \Phi_M \|_\diamond 
= 
\frac{1}{2}\| \lam (\Phi_{\tilde{M}} - \Phi_M) \|_\diamond 
\leq
\lam
$
due to the universal bounds on trace distances between states~\cite{wilde2011notes}. This provides a strong sense in which $M_\lam$ and $M$ are close together for continuity purposes, and ensures in particular that the probability distributions  (outcome statistics) induced by $M_\lam, M$ on any state differ by no more than $\lam$ in the total variational distance (half of the $\ell_1$ norm).

Continuing, suppose $\rho=\ketbra{\psi}$ is pure and $M = (\ketbra{\psi}, \one - \ketbra{\psi})$ measures $\rho$ perfectly. The observational entropy of $M$ and its permutation are equal in general, and in the present case one finds $S_M(\rho) = S_{\Tilde{M}}(\rho) = 0$. On the other hand, for the convex combination $M_\lambda$ one has probabilities and volumes
\begin{equation}
    \begin{array}{lcl}
         p_0 = 1-\lam, & \hspace*{10mm} & V_0 = 1+\lam(d-2), \\[4pt]
         p_1 = \lam,  &  & V_1 = (d-1) - \lam(d-2),
    \end{array}
\end{equation}
leading to an observational entropy
\begin{equation}
    S_{M_\lam}(\rho) 
    = (1-\lam) \log \left( \frac{1+\lam(d-2)}{1-\lam} \right)
    + \lam \; \log \left( \frac{(d-1) - \lam(d-2)}{\lam} \right).
\end{equation}
This can be viewed as Boltzmann and Shannon terms by expanding the logs.

Now the entropy $S_{M_\lam}(\rho)$ may be used to analyze both the bounded concavity, and directly the continuity, of observational entropy viewed as a function of measurement. First consider the bounded concavity. In the present case we have an entropy concavity difference 
\begin{equation}
\label{eqn:example-concavity-bound}
    \Big| S_{(1-\lam)M + \lam \Tilde{M}}(\rho) - \big[(1-\lam) S_{M}(\rho) + \lam S_{\Tilde{M}}(\rho)\big] \Big| = S_{M_\lam}(\rho)
\end{equation}
since $M_{\lam} = (1-\lam)M + \lam \Tilde{M}$ with the $M,\Tilde{M}$ each giving zero entropy.
And second consider directly the continuity. In the present case we have measurements such that $\frac{1}{2}\|\Phi_{M_{\lam}} - \Phi_M\|_\diamond \leq \lam$ and whose entropy difference is
\begin{equation}
\label{eqn:example-continuity-bound}
    \big| S_{M_{\lam}}(\rho) - S_{M}(\rho) \big| = S_{M_\lam}(\rho)
\end{equation}
since the latter term gives zero. Therefore general bounds on continuity and concavity can be no better than in this particular case.

\pagebreak\noindent
We make the following observations (with $S_\lam \equiv S_{M_\lam}(\rho)$ and $d \geq 2$ and $\lambda \in [0,1/2]$):
\begin{itemize}
    \item Any $\lam > 0$ implies $S_\lam>0$. For fixed $d$, as $\lam \to 0$ also $S_\lam \to 0$, consistent with continuity.
    \item $S_{1/2} = \log d$, so \eqref{eqn:example-concavity-bound} shows that the naive concavity bound $H(\lam)$ that held for states (\textit{cf.}~\eqref{eqn:Z-bounded-concave}) can be violated by a maximum amount.
    \item For fixed $\lam > 0$, one can achieve $S_\lam / \log d$ arbitrarily close to $1$ by taking $d$ sufficiently large. This shows that there is no continuous $f(\lam)$ such that $f(0)=0$ and $S_\lam \leq f(\lam) \log d$.
\end{itemize}
\end{example}

\noindent
The final observation implies a strong limitation on asymptotic continuity bounds for measurements: there cannot be any continuous function $f(\lam)$ with $f(0)=0$ such that
\begin{equation}
\label{eqn:example-no-go}
    \big| S_{N}(\rho) - S_{M}(\rho) \big| \leq f(\lam) \log d
\end{equation}
for all $\frac{1}{2}\|\Phi_N - \Phi_M\|_\diamond \leq \lam$. This form includes all bounds of the type considered earlier for continuity of various entropies with respect to states.

Moreover, one might have hoped to approach the measurement continuity problem similarly to the method used for states, by attempting some form of an AFW trick (modified for channels) based on bounded concavity. However, the example shows that not only does the naive concavity bound by $H(\lam)$ fail, but also (likewise with the continuity above) that no concavity bound of the form $f(\lam) \log d$ is possible.

Due to these limitations, at present we settle for the following more abstract statement of continuity over measurements. We give here a general statement containing observational entropy as a special case.

\begin{thm}
\label{thm:measurement-continuity}
    Let $\rho,\sigma$ be states such that $D(\rho \rr \sigma) < \infty$. Then $F(\Phi) =  D\big(\Phi(\rho) \rr \Phi(\sigma)\big)$ is a uniformly continuous function of the quantum channel $\Phi$.
\end{thm}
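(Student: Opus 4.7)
The plan is to reduce uniform continuity of $F$ to a continuity statement for quantum relative entropy on sets where it is uniformly bounded, exploiting the data processing inequality to confine attention to exactly such a set. Set $C = D(\rho \rr \sigma)$, which by hypothesis is finite. For any channel $\Phi$, data processing gives $0 \le F(\Phi) \le C$, so $F$ is uniformly bounded. For any pair of channels with $\tfrac{1}{2}\|\Phi_1 - \Phi_2\|_\diamond \le \delta$, the definition of the diamond norm (applied to $\rho$ and to $\sigma$, trivially extended with a reference system) yields $\|\Phi_1(\rho) - \Phi_2(\rho)\|_1 \le 2\delta$ and $\|\Phi_1(\sigma) - \Phi_2(\sigma)\|_1 \le 2\delta$. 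Thus both pairs $(\Phi_i(\rho),\Phi_i(\sigma))$ lie in the set
\[
    \mathcal{E}_C = \{(\tau,\omega): D(\tau \rr \omega) \le C\}
\]
and are componentwise trace-norm close.

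The heart of the argument is then to show that on $\mathcal{E}_C$ the relative entropy is uniformly continuous: there exists a modulus $\kappa(\eta, C) \to 0$ as $\eta \to 0$ such that $(\tau,\omega),(\tau',\omega') \in \mathcal{E}_C$ together with $\|\tau - \tau'\|_1 \le \eta$ and $\|\omega - \omega'\|_1 \le \eta$ imply $|D(\tau \rr \omega) - D(\tau' \rr \omega')| \le \kappa(\eta, C)$. Granting this, the previous paragraph immediately yields $|F(\Phi_1) - F(\Phi_2)| \le \kappa(2\delta, C)$ uniformly over channels, which is the desired uniform continuity.

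The main obstacle lies precisely in the construction of the modulus $\kappa$. Relative entropy is notoriously discontinuous in its second argument whenever rank can change in the limit, and the uniform bound $D \le C$ is what rules out such pathologies, but extracting an explicit modulus from this bound requires care. One concrete route is an Alicki--Fannes--Winter argument adapted to pairs of states: apply the $\omega\Delta$-style decomposition of Lemma~\ref{thm:omega-delta-decomp} separately to the first and to the second arguments, use joint convexity of relative entropy to produce matched upper and lower bounds, and then absorb the resulting error terms using the boundedness $D \le C$ together with Fannes-type estimates for the auxiliary von Neumann entropies that appear when one rewrites $D(\tau \rr \omega) = -S(\tau) - \tr(\tau \log \omega)$ and separates the continuous piece from the unbounded piece. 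Alternatively---and probably more economically---one may invoke continuity theorems for bounded relative entropy already available in the literature (for example in Shirokov's framework), which were developed in exactly this setting; this is the route I would take in a first pass, reserving the direct AFW-style derivation as a self-contained backup.
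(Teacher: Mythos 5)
The reduction in your second and third paragraphs rests on a lemma that is false: no modulus $\kappa(\eta,C)\to 0$ exists for the set $\mathcal{E}_C$. Membership of both pairs in $\mathcal{E}_C$ together with componentwise trace-norm closeness does \emph{not} force the relative entropies to be close, already in dimension $2$. Fix $C>0$ and small $\eta>0$, set $s=\sqrt{\eta}$, and take
\[
\tau=\tau'=\mathrm{diag}(1-s,\,s),\qquad \omega=\mathrm{diag}(1-t,\,t)\ \text{ with $t$ defined by }\ s\log(s/t)=C,\qquad \omega'=\mathrm{diag}(1-\eta,\,\eta).
\]
Then $\|\tau-\tau'\|_1=0$ and $\|\omega-\omega'\|_1\le 2\eta$ (since $t=s\,e^{-C/s}\ll\eta$), while $D(\tau\rr\omega)=C-s+o(s)$ and $D(\tau'\rr\omega')=O\big(\sqrt{\eta}\,\log(1/\eta)\big)$. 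Both pairs lie in $\mathcal{E}_C$, yet the difference of relative entropies tends to $C$ as $\eta\to 0$. So the bound $D\le C$ does not ``rule out the rank pathologies'' as you assert: it only forces $\tau$ to put weight of order $C/\log(1/\lambda)$ on eigendirections where $\omega$ has eigenvalue $\lambda$, and perturbing $\omega$ by $\eta\gg\lambda$ in such a direction can still swing $D$ by an amount of order $C$. For the same reason, neither an AFW decomposition applied separately to the two arguments nor the literature results will rescue the step: the available two-argument continuity bounds for $D$ (Audenaert--Eisert, Shirokov, Bluhm et al.) are stated in terms of the minimal nonzero eigenvalue of the second arguments, or of $\sup_\mu D(\mu\rr\omega)$, precisely because a modulus depending only on $D(\tau\rr\omega)\le C$ cannot exist.

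What your reduction discards is that the two output pairs arise from diamond-close channels applied to the \emph{same} input pair $(\rho,\sigma)$; that structure, not the bound $F\le C$ alone, is what makes the theorem true. The paper exploits it by perturbing the channels rather than the output states: it replaces $\Phi$ by $\Phi_s=(1-s)\Phi+s\,\tau_0$ with $\tau_0(X)=\tr(X)\,\one/d$ the completely depolarizing channel, notes that convexity of $D$ together with $F(\tau_0)=0$ gives $|F(\Phi)-F(\Phi_s)|\le s\,F(\Phi)\le s\,D(\rho\rr\sigma)$, and then compares $F(\Phi_s)$ with $F(\Phi'_s)$ on the set of outputs whose eigenvalues are all at least $s/d$, where $D$ genuinely is (uniformly) continuous; a three-term triangle inequality and a choice of $s$ proportional to the target accuracy, plus compactness of the channel set, finish the argument. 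If you want to keep your two-argument framing, you must replace $\mathcal{E}_C$ by a set on which the second arguments are bounded away from singularity, and the depolarizing trick is exactly how one reaches such a set at a cost controlled by $D(\rho\rr\sigma)<\infty$.
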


\begin{proof}
    We are in finite dimension, so the quantum channels form a finite-dimensional convex set and all norms induce the same topology. For concreteness we use the diamond norm. 
    To show continuity we have to show that for any $\Phi$ and any $\delta > 0$, there exists an $\eps > 0$ such that 
    $\diamondN{\Phi - \Phi'} < \eps$ 
    implies 
    $|F(\Phi) - F(\Phi')|<\delta$.

    To begin, for any channel $\Phi$ define $\Phi_s = (1-s) \Phi + s \tau$, which mixes $\Phi$ with noise, letting $\tau(X) = \tr(X) \one /d$ prepare maximally mixed state. Observe that $F(\Phi)$ is convex in $\Phi$ and is finite and non-negative, and that $F(\tau)=0$. It follows that $F(\Phi) \geq (1-s) F(\Phi) \geq F(\Phi_s)$. Therefore for any $\Phi$ one obtains $|F(\Phi) - F(\Phi_s)| \leq s F(\Phi)$.

    Additionally, for any $\Phi$, the states $\Phi_s(\rho)$ and $\Phi_s(\sigma)$ are positive definite. The set $\pos$ of positive definite matrices is an open set in the $2$-norm matrix topology (the usual Euclidean topology on matrix elements). Over the domain $\mu,\nu \in \pos$, the function $D(\mu \rr \nu)$ is continuous, by the elementary reasoning that it is a composition of matrix multiplication, linear combination, trace, and log, which are all continuous functions on the positive set $\pos$. Thus for any $\delta_s > 0$ there exists an $\eps_s >0$ such that if both $\normN{2}{\Phi_s(\rho) - \Phi'_s(\rho)}<\eps_s$ and $\normN{2}{\Phi_s(\sigma) - \Phi'_s(\sigma)}<\eps_s$, then $|F(\Phi_s) - F(\Phi'_s)|<\delta_s$. This~$\eps_s$ must be chose sufficiently small so that both $\eps_s$-balls lie entirely in $\pos$.  

    Meanwhile, for channels $\Phi,\Phi'$ it holds 
    $\diamondN{\Phi_s - \Phi'_s} = \diamondN{(1-s) (\Phi - \Phi')} \leq \diamondN{\Phi - \Phi'} $. And therefore for any state $\nu$ we have
    $\normN{2}{\Phi_s(\nu) - \Phi'_s(\nu)}
    \leq
    \normN{1}{\Phi_s(\nu) - \Phi'_s(\nu)}
    \leq
    \diamondN{\Phi_s - \Phi'_s}
    \leq
    \diamondN{\Phi - \Phi'}
    $.

    Now we combine the ingredients. For any $1>s>0$, $\delta_s > 0$ let $\epsilon_s$ be as above, and suppose $\diamondN{\Phi - \Phi'} < \eps_s$. Observe that
    \begin{align}
        |F(\Phi) - F(\Phi')| 
        & = 
        |F(\Phi) - F(\Phi_s) + F(\Phi_s) - F(\Phi'_s) + F(\Phi'_s) - F(\Phi')|
        \\
        & \leq
        |F(\Phi) - F(\Phi_s)| + |F(\Phi_s) - F(\Phi'_s)| + |F(\Phi'_s) - F(\Phi')|
        \\
        & \leq
        s F(\Phi) + \delta_s + s F(\Phi')
        \\
        & \leq
        \delta_s + 2 s D(\rho \rr \sigma).
    \end{align}
    The first inequality is triangle inequality. The second follows from the observations listed above. The last is relative entropy monotonicity.

    Finally, choose a $\delta > 0$. Since we assumed $D(\rho \rr \sigma) < \infty$, we can choose $s>0$ sufficiently small so that $2 s D(\rho \rr \sigma) < \delta$. Next choose any $\delta_s > 0$ sufficiently small so that also  $2 s D(\rho \rr \sigma) + \delta_s < \delta$. There exists an $\epsilon_s >0$ as above so that $\diamondN{\Phi - \Phi'} < \eps_s$ implies $|F(\Phi_s) - F(\Phi'_s)| < \delta_s$ and therefore also that $|F(\Phi) - F(\Phi')| < \delta$. This completes the proof of continuity. The finite dimensional quantum channels form a compact set, thus continuity of $F$ implies it is also uniformly continuous. One can also argue directly (but with more technical difficulty) for uniform continuity by fixing $s=\delta/4D(\rho \rr \sigma)$ and using uniform continuity of $\log$ on the domain $[s/d,\infty)$.
\end{proof}

Observational entropy is determined by $D(\Phi(\rho) \rr \Phi(\sigma))$ in the case $\sigma = \one/d$ and $\Phi$ a measuring channel. Since $D(\rho \rr \one/d) < \infty$ always, this implies that  (in the topology induced by the diamond norm on $\Phi_M$) observational entropy $S_M(\rho)$ is a continuous function of the measurement. In this form the continuity is useful only for measurements on the same outcome set.

The continuity under measurements can be extended to measurements on arbitrary outcome sets by defining an appropriate topology. Such a topology can be induced by a \textit{simulation distance} defined (for POVMs $M,N$) by
\begin{equation}
    \dist(M,N) = \tfrac{\Vec{\dist}(M,N) + \Vec{\dist}(N,M)}{2},
    \qquad \qquad
    \Vec{\dist}(M,N) = \inf_{\Lambda} \; \tfrac{1}{2} \| \Phi_{\Lambda M} - \Phi_N \|_{\diamond},
\end{equation}
where the infimum defining the ``one-way'' simulation distance $\Vec{\dist}(M,N)$ is over stochastic maps $\Lambda$ from the $M$ outcome set to the $N$ outcome set, so that $\Lambda M$ and $N$ are measurements with the same set of outcomes. The function $\dist(M,N)$ can be shown to obey the axioms of a (pseudo-)metric, and therefore induces a topology on the space of POVMs. This distance captures how well two POVMs can approximate one another using classical postprocessing by stochastic channels---if both can emulate each other perfectly by postprocessing alone, they are effectively the same POVM. In terms of this topology we have continuity of measured relative entropy $D_M$ with respect to $M$, and therefore also of $S_M$ the observational entropy.

\begin{cor}
\label{thm:measurement-continuity-simulation-distance}
    Let $\rho,\sigma$ be states such that $D(\rho \rr \sigma) < \infty$. Then $F(M) =  D_M(\rho \rr \sigma)$ is a continuous function of the measurement $M$ in the simulation distance topology. This also implies continuity of $F_0(M) = S_M(\rho)$ as a particular case. The continuity is uniform (at least for $d < \infty$).
\end{cor}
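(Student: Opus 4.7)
The plan is to deduce the corollary from Theorem~\ref{thm:measurement-continuity} by using the data processing inequality to pass from a simulation-distance bound between POVMs to a diamond-norm bound between measuring channels (on a common outcome set), where Theorem~\ref{thm:measurement-continuity} can be applied directly.

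The first step is to observe that for any stochastic map $\Lambda$ from the outcome set of $M$ to that of $N$, the measuring channel factorizes as $\Phi_{\Lambda M} = \widetilde{\Lambda} \circ \Phi_M$, where $\widetilde{\Lambda}$ is the classical CPTP channel on the diagonal subalgebra implementing the transition matrix of $\Lambda$. One verifies this directly from the definition \eqref{eqn:mre-channel} using $(\Lambda M)_j = \sum_i \Lambda_{ji} M_i$. Consequently, the quantum DPI gives the monotonicity
\begin{equation}
    D_{\Lambda M}(\rho \rr \sigma)
    = D\bigl(\widetilde{\Lambda}\circ \Phi_M(\rho) \,\big\|\, \widetilde{\Lambda}\circ \Phi_M(\sigma)\bigr)
    \leq D_M(\rho \rr \sigma),
\end{equation}
which expresses that classical post-processing of the outcomes can only decrease the measured relative entropy.

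Now fix $\delta > 0$. Because $D(\rho \rr \sigma) < \infty$, Theorem~\ref{thm:measurement-continuity} supplies an $\eps > 0$ such that any channels satisfying $\tfrac{1}{2}\diamondN{\Phi - \Phi'} < \eps$ produce relative entropies differing by less than $\delta$. Suppose $\dist(M,N) < \eps/2$; then both $\Vec{\dist}(M,N)$ and $\Vec{\dist}(N,M)$ are strictly less than $\eps$, so by definition of infimum one may select stochastic maps $\Lambda, \Lambda'$ with $\tfrac{1}{2}\diamondN{\Phi_{\Lambda M} - \Phi_N} < \eps$ and $\tfrac{1}{2}\diamondN{\Phi_{\Lambda' N} - \Phi_M} < \eps$. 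Chaining Theorem~\ref{thm:measurement-continuity} with the monotonicity bound above yields
\begin{equation}
    D_N(\rho \rr \sigma) \leq D_{\Lambda M}(\rho \rr \sigma) + \delta \leq D_M(\rho \rr \sigma) + \delta,
    \qquad
    D_M(\rho \rr \sigma) \leq D_{\Lambda' N}(\rho \rr \sigma) + \delta \leq D_N(\rho \rr \sigma) + \delta,
\end{equation}
so $|F(M) - F(N)| \leq \delta$. Uniformity is immediate, since the $\eps$ extracted from Theorem~\ref{thm:measurement-continuity} depends only on $\delta$ and on the fixed pair $(\rho,\sigma)$, not on $M$ or $N$.

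The observational entropy case $F_0(M) = S_M(\rho)$ follows by specialization to $\sigma = \one/d$: the identity \eqref{eqn:mre-entropy} gives $S_M(\rho) = \log d - D_M(\rho \rr \one/d)$, and $D(\rho \rr \one/d) = \log d - S(\rho) < \infty$ holds unconditionally, so continuity of $F$ yields continuity of $F_0$. I do not foresee any serious obstacle; the only technical point worth checking carefully is the identification $\Phi_{\Lambda M} = \widetilde{\Lambda}\circ \Phi_M$ and the verification that $\widetilde{\Lambda}$ is a bona fide CPTP map on the diagonal so that DPI applies, which is a one-line calculation from the transition-matrix description of $\Lambda$.
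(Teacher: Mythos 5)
Your proposal is correct and follows essentially the same route as the paper's proof: extract near-optimal stochastic maps $\Lambda,\Lambda'$ from the definition of $\dist(M,N)$, apply Theorem~\ref{thm:measurement-continuity} to the pairs $(\Phi_{\Lambda M},\Phi_N)$ and $(\Phi_{\Lambda' N},\Phi_M)$, and close the loop with the data-processing inequality $D_{\Lambda M}\leq D_M$ (which the paper invokes as ``relative entropy monotonicity'' and you justify slightly more explicitly via $\Phi_{\Lambda M}=\widetilde{\Lambda}\circ\Phi_M$). The only differences are trivial bookkeeping of the $\eps$-thresholds and your explicit treatment of the $F_0(M)=S_M(\rho)$ specialization, which the paper relegates to the surrounding discussion.
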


\begin{proof}
    Fix $\delta > 0$ and let $\eps = \eps(\delta)$ as supplied by Theorem~\ref{thm:measurement-continuity}. Choose any $c$ with $\eps > c >0$, and let $\eps' = (\eps -c)/4 >0$. Suppose $\gamma(M,N)<\eps'$. From the definition of $\gamma$, there exist $\Lambda, \Lambda'$ such that $\| \Phi_{\Lambda M} - \Phi_N \|_{\diamond} < \eps$ and $\| \Phi_{\Lambda' N} - \Phi_M \|_{\diamond} < \eps$. Thus $|D_{\Lambda M} - D_N|<\delta$ and $|D_{\Lambda' N} - D_M|<\delta$. We then have $D_M \geq D_{\Lam M} > D_N - \delta$ and also $D_M < D_{\Lam' N} + \delta \leq D_N + \delta$, which follow using relative entropy monotonicity, and thus $|D_M - D_N| < \delta$. This completes the proof.
    
    Put more simply, we stated that for small $\gamma(M,N)$ we have
    \mbox{$D_N \geq D_{\Lam' N} \approx D_M \geq D_{\Lam M} \approx D_N$},
    which implies $D_M \approx D_N$, 
    where $\geq$ are by monotonicity and $\approx$ are by the earlier continuity.
\end{proof}

\section{Concluding remarks}

The main Theorem~\ref{thm:oe-continuity} described a continuity bound \eqref{eqn:oe-bound} on observational entropy that is universal, in the sense of being independent of the POVM $\M$ defining the coarse-graining, and with a dimensional factor $\log d$ of the Hilbert space dimension.

This bound improves upon the naive bound \eqref{eqn:naive-bound} insofar as it scales only with $\log d$ independently of the number of measurement outcomes $|M|$. This reflects the intuitive status of observational entropy as a classical entropy in quantum systems, compared, for example to the Shannon entropy of an observable. The improved universality does come with a small trade-off, as when $M$ is a projective measurement in a complete orthonormal basis, the naive bound \eqref{eqn:naive-bound} is slightly stronger. We note that~\eqref{eqn:oe-bound}~can only be relevant when $\rho,\sigma$ have overlapping support. In case the supports are orthogonal, the trace distance $\eps=1$ is maximized, and \eqref{eqn:oe-bound} is weaker than the trivial $|\Delta S_M| \leq \log d$. This demonstrates that \eqref{eqn:oe-bound} cannot be tight in general---for fixed $\eps$, there is not generally a choice of $\rho,\sigma,\M$ for which the inequality is saturated. 

Meanwhile, in the case of continuity with respect to measurements, it was seen that although observational entropy is continuous, explicit bounds on the convergence are difficult (if not impossible) to obtain. Other notable limitations on the present results were the necessity of a convex reference set for measured relative entropy continuity (as discussed earlier), and  the restriction throughout the paper to finite dimensional spaces. Extending our results to the infinite dimensional case encounters difficulties associated with the possibility of infinite Boltzmann and von Neumann entropies, and doing so carefully will be a useful continuation.

One can also consider a classical version of the observational entropy development for density distributions $\rho,\sigma$ on phase space $\Gamma$. The trace is an integral on $\Gamma$, the norm the $L^1$ norm, and the measurement $M$ a partition of unity. The dimensional factor $d$ (better labelled $V$ in the classical case) is the total volume of phase space in an appropriately chosen measure. The resulting bound~\eqref{eqn:oe-bound} holds equally well in the classical case so long as one can restrict consideration to a finite region of phase space.

As with the quantum case, the classical \eqref{eqn:oe-bound} can be nontrivial (stronger than $|\Delta S_M| \leq \log V$) only when $\rho,\sigma$ have support on an overlapping region of phase space. In this sense, the closeness of entropy ensured by the bound arises only from a probability to be in exactly the same state. Further one may observe that the bound will be time independent for isolated systems, whose dynamics are norm preserving. This can be considered in terms of the second law of thermodynamics for coarse-grained entropies. Suppose $\rho$ is concentrated at some particular ``second-law-violating'' point in phase space, one which evolves to a lower entropy state. Any $\sigma$ sufficiently close to $\rho$ in the norm distance must also be second-law-violating, which at first glance seems to violate the principle that such states are rare. Fortunately, by the above considerations, $\sigma$ may be concentrated in any phase space region disjoint from $\rho$, even arbitrarily nearby, without any restriction on its entropic dynamics. In this way the bound is consistent with the principle that nearby phase space points of a chaotic system may have drastically different dynamics.

\begin{acknowledgements}
The authors thank Ludovico Lami for comments regarding the measured relative entropy continuity and the minimax lemma, and Niklas Galke, Philipp Strasberg, and Giulio Gasbarri for helpful discussions.

The authors acknowledge support by MICIIN with funding from European Union NextGenerationEU (PRTR-C17.I1) and by Generalitat de Catalunya.
AW is furthermore supported by the European Commission QuantERA grant ExTRaQT (Spanish MICINN project PCI2022-132965), by the Spanish MINECO (project PID2019-107609GB-I00) with the support of FEDER funds, the Generalitat de Catalunya (project 2017-SGR-1127), by the Alexander von Humboldt Foundation, as well as the Institute for Advanced Study of the Technical University Munich. 
\end{acknowledgements}


\bibliography{biblio}

\clearpage

\end{document}